\documentclass{article}

\usepackage{graphicx} % Required for inserting images
\usepackage{hyperref}
\usepackage{amsmath}
\usepackage{amssymb}
\usepackage{cleveref}
\usepackage{amsthm}
\usepackage{tikz}
\usepackage{multicol}
\usepackage{stmaryrd}

\usepackage{algorithm}
\usepackage[noend]{algpseudocode}
\usepackage{varwidth}

\usetikzlibrary{trees, positioning}

\usepackage[rm={proportional,lining},sf={proportional,lining},tt={tabular,lining}]{cfr-lm}

\theoremstyle{definition}
\newtheorem{theorem}{Theorem}

\newtheorem{lemma}{Lemma}

\crefname{lemma}{Lemma}{Lemmas}
\crefname{section}{Section}{Sections}
\crefname{figure}{Figure}{Figures}

\title{Earliest query answering over streamed trees}
\author{
    Mateusz Gienieczko\,\textsuperscript{1}\!,
    Martín Muñoz\,\textsuperscript{2}\!,
    Filip Murlak\,\textsuperscript{3}\!,\\
    Charles Paperman\,\textsuperscript{4}
}

\begin{document}

\maketitle

\begin{center}
\small
\textsuperscript{1}Technical University of Munich, Germany\\
\textsuperscript{2}Univ. Artois, CNRS, UMR 8188, Centre de Recherche en Informatique de Lens (CRIL), F-62300, France\\
\textsuperscript{3}University of Warsaw, Poland\\
\textsuperscript{4}CRIStAL, Université de Lille, INRIA, France
\end{center}

\begin{abstract}
Streaming allows executing queries over massive JSON or XML documents whose size makes it infeasible to fully parse them into a tree. Earliest query answering is a radical approach to reducing latency and memory footprint. To minimize latency, a document node must be returned as soon as  the node is guaranteed to be an answer regardless of how the document ends. Similarly, to minimize memory footprint, a node must be discarded as soon as it cannot become an answer regardless of how the document ends. For simple queries that select nodes based on the path from the root, the decision for each node can be made on the spot, but practical languages such as XPath or JSONpath support filters, which allow selecting nodes based on information collected from various parts of the document, possibly further down the stream. This makes earliest query answering a challenging task, as candidate nodes must be kept in memory until it becomes clear that they can be safely returned or discarded. We show that this can be done for all unary queries expressible in monadic second order logic (MSO), while ensuring constant update time---provided that nodes are returned by passing a suitable iterator, rather than one by one. 
\end{abstract}

\section{Introduction} \label{sec:intro}
Tree-shaped data occurs in many settings, from XML and JSON documents to
program syntax trees, logs, database dumps, and web data~\cite{abiteboul2000data}. A large body of work,
originating especially from the XML era, has studied how to extract
information from such data~\cite{gottlob_2005, xpath_survey_2008, abiteboul_1997, gottlob_complexity_xpath_2003, melnik_dremel_2010, json_2017}. Despite this long history, efficient query
evaluation over trees remains a relevant problem. In high-throughput settings
the cost of parsing the input and materializing a full in-memory representation
of the tree can already dominate the cost of query evaluation itself~\cite{palkar_filter_2018}.

A standard approach is to first build an in-memory representation of
the input tree and then evaluate the query on this representation. This
approach is often too expensive---both in time and in memory---especially for
massive semi-structured files. To obtain high performance, parsing and query
evaluation should instead be integrated: the evaluator should inspect the input
as it is read, maintain only the information that may still influence future
answers, and avoid materializing the whole tree whenever possible. This
naturally leads to {\sbweight streaming algorithms for querying trees}.

Streaming evaluation is relevant for
fully streaming query engines, such as
rsonpath~\cite{gienieczko_supporting_2024} and
JSONSki~\cite{jiang_jsonski_2022}, but also for preprocessing pipelines that
filter or partially analyze documents before loading them into a database, as
in Sparser~\cite{palkar_filter_2018}, or in more elaborate storage mechanisms
such as JSONTiles~\cite{durner_json_2021}. In this setting,
complex navigation and filtering constructs may force the query engine to maintain
candidate nodes across long portions of the input stream. This phenomenon
already appears for filters in XPath~\cite{robie_xml_2017}, and also in the
recently proposed JSONPath standard~\cite{gossner_jsonpath_2024}.

In a streaming setting, two efficiency requirements are particularly important.
The first is {\sbweight memory}: the algorithm should have {\sbweight a parsimonious memory footprint},
storing only information that may still affect future answers. The second is
{\sbweight delay}: {\sbweight answers should be produced as soon as possible}. Once the status of a
node is determined by the prefix read so far, keeping this node as a pending
candidate increases memory consumption and introduces unnecessary latency in returning the result set. Conversely, the decision must be made only once enough information is available. Earliest query answering~\cite{gauwin_queries_2011, gauwin_09} captures this tension by requiring each node
to be accepted or rejected as soon as its status is forced by the current
stream prefix.

In this paper, we represent trees as nested words~\cite{nested_word}, or equivalently as streams
of opening and closing events. We study Boolean and unary queries expressed in
monadic second-order logic (MSO)~\cite{comon2008tree}. A Boolean query asks whether the whole tree
satisfies a property. A unary query selects positions, or nodes, of the
input tree that satisfy a formula with one free first-order variable,
and the output is the set of positions satisfying the formula. 

One cannot hope, however, for \emph{constant} memory in general. Even simple unary
queries may require the algorithm to remember linearly many candidate positions.
For example, consider the query that selects all nodes whose label is equal to
the label of the last leaf in document order. Until the last leaf is read, the
algorithm does not know which label is relevant; in the worst case, it may
therefore need to keep enough information to output all positions with that
eventual label. Thus a $\Omega(n)$ memory lower bound is unavoidable for some queries, with $n$ the size of the input word.
Our goal is therefore not to guarantee small memory for every query. Rather, we
aim to maintain the necessary information efficiently, while ensuring that no
candidate is kept longer than what is logically necessary. 

\paragraph{Our contribution} 
We provide {\sbweight a streaming
algorithm for unary queries definable in MSO that complies with earliest query answering by maintaining a specialized data structure with constant-time updates} in data
complexity, in the RAM model. The constant may depend on the query, but not on
the size of the input prefix read so far. 
The algorithm performs each step with constant delay, and it emits each position at
the earliest prefix at which its membership in the query result set is determined.
More precisely, a position is output (resp. rejected) as soon as all possible completions of the
current stream prefix agree that this position belongs (resp. does not belong) to the answer.
This notion is independent of syntactic elements such as the closing event of the
corresponding node: depending on the query, a position may become determined either
before its subtree is closed, exactly when it is closed, or after. %
The algorithm is therefore optimal regarding the information available in the
stream prefix: no correct streaming algorithm can output or reject a position earlier.

\paragraph{Structure of the paper}
After this, in Section~\ref{sec:preliminaries}, we present some preliminary notions in order to state the main result.
Then, in Section~\ref{sec:boolean}, we show a simplified algorithm that does earliest query answering for {\em boolean} formulas; namely, the algorithm only has to decide whether the forest is a model of the formula or not, and do this as early as possible. 
After this  (Section~\ref{sec:unary}), we present the main result of the paper as an extension of the algorithm from the previous section that works for unary formulas. This algorithm uses a certain data structure that we first describe as an interface; we show its inner workings in the following section (Section~\ref{sec:ds}).
The last two sections are a discussion about related works in the literature (Section~\ref{sec:related}), and then conclusions and future work (Section~\ref{sec:concl}).

%\section{preliminaries}

\section{Preliminaries and Problem statement} \label{sec:preliminaries}

\paragraph{Forests and trees}
Our data will be represented by {\em ordered unranked labeled forests} (from now on, just {\em forests}). They contain \emph{nodes} which are labeled by a function $\lambda$ mapping nodes to an alphabet $\Sigma$. We use the standard notions of {\em parent}, {\em child}, {\em sibling}, etc., and we assume that the roots in the forest are also siblings.

We will also compose forests by concatenation and by application. If $F$ and $G$ are forests, then the forest $H = F\cdot G$ is the disjoint union of the two: if $G$ is not empty, then the rightmost root of $F$ becomes the left sibling of the leftmost root of $G$, and if $G$ (resp. $F$) is empty, then $H = F$ (resp. $H = G$). If $a\in \Sigma$ and $F$ is a forest, then $H = a(F)$ is a tree: if $F$ is non empty, then $H$ is a tree with a root that is labeled $a$, and that has the roots of $F$ as its children, and if $F$ is empty, then $H$ is a single node labeled $a$. Observe that for every nonempty forest $F$ there are two unique (possibly empty) forests $G$ and $H$, and a symbol $a\in\Sigma$ such that $F = a(G)\cdot H$. 

{\bf Important: } For presentation purposes, the title of the paper and the introduction talk about data represented in trees. Nonetheless, for the rest of the paper the input will always be forest, and the word ``tree'' will exclusively refer to a forest with one root. 

\paragraph{Automata model} We work with automata over unranked forests, in the first-child next-sibling variant. We assume the automata are bottom-up and {\em right-to-left} deterministic\footnote{It might seem counterintutive that we work with a model in which computation progresses leftward and upward, as it is essentially the opposite of what happens when we read the term encoding of a forest, but it proves convenient later on.}. Formally, an automaton is a tuple $\mathcal{A} = (\Sigma, Q, \delta, {\sf init}, {\sf Final})$ where $Q$ is a set of states, $\delta: \Sigma \times Q\times Q \to Q$ is the transition function, ${\sf init}\in Q$ is the sole initial state, and ${\sf Final}\subseteq Q$ is the set of final states. 

The semantics are as follows. Fix a forest $F$: we say that $\mathcal{A}$ evaluates a node $v$ in $F$ to a state $r$ if it evaluates its leftmost child to $p$, its right sibling to $q$, and $\delta(\lambda(v), p, q) = r$.
If the leftmost child or the right sibling do not exist, they evaluate to ${\sf init}$ by default.
Alternatively, we can recursively define $\delta(F) = \delta(a, \delta(G), \delta(H))$ if $F = a(G)\cdot H$, and $\delta(F) = {\sf init}$ if $F = \varepsilon$.
We say that $\mathcal{A}$ evaluates $F$ to $q$ if it evaluates its leftmost root to $q$; and we say it accepts $F$ if $q\in {\sf Final}$.

We assume that all states of $\mathcal{A}$ are {\em reachable}, by which we mean that for each $q\in Q$ there exists a forest that evaluates to $q$.

We briefly note that every Boolean query expressible in the monadic second-order logic on trees (MSO) is expressible as a forest automaton, and vice versa~\cite{ThatcherW68}; more on this later.

\paragraph{Forest encodings}
Given a forest $F$, the {\em term encoding} of $F$, denoted by ${\sf enc}(F)$ is a string in $\Sigma\cup \{\langle\,, \rangle\}$ that explicitly lays out a sequence of concatenations and applications that constitutes $F$: if $F$ is empty, then ${\sf enc}(F)$ is the empty string $\varepsilon$; otherwise, $F = a(G)\cdot H$ for some unique $a\in\Sigma$, $G$, $H$, and we recursively define ${\sf enc}(F) := a\,\langle\,{\sf enc}(G)\,\rangle\, {\sf enc}(H)$.
We say that a string $w$ over $\Sigma\cup \{\langle\,, \rangle\}$ is {\em well-nested} if there exists a forest $F$ such that ${\sf enc}(F) = w$.

We adopt a tokenized view of encodings: we see ${\sf enc}(F)$ as a sequence  $t_1 \cdots t_n$ where each $t_i$ is either an \emph{opening tag} of the form $a\,\langle$ for some $a\in\Sigma$ or the \emph{closing tag} $\rangle$. 

We will also work with prefixes of these term encodings. We say that a forest $F$ {\em agrees with} a word $w$ if $w$ is a prefix of ${\sf enc}(F)$ that ends with a symbol from $\{\langle\,, \rangle\}$; if some $F$ agrees with $w$, then we might simply say that $w$ is a {\em prefix} without indicating $F$.
For a prefix $w = w_0 \, c_1 \,\langle\, w_1 \cdots \, c_d \,\langle\, w_d$ where $w_0,\ldots,w_d$ are well-nested words, we call $c_1\langle\,,\ldots,c_d\langle$ the {\em open tags of $w$}, and we will sometimes single out $c_d\langle$ as the {\em last open tag} of $w$.
The {\em depth} of a prefix $w$, written as  ${\sf depth}(w)$, is its number of open tags.
We sometimes talk about the nodes in a prefix $w$, by which we mean the nodes that are common to every forest that agrees with $w$; equivalently, they are the nodes in the forest encoded by the word $w' = w \,\rangle\cdots \,\rangle\,$, in which $w$ is followed by ${\sf depth}(w)$ copies of the closing tag.
%We use the notation $\widehat w$ for the node that corresponds to the last open tag of $w$.

In this work, algorithms will read an input forest $F$ represented as the string ${\sf enc}(F)=t_1 t_2 \cdots t_n$, tag by tag. If, in the current step, the algorithm has read $i$ tags, then we call the prefix $t_1 \cdots t_i$ the {\em current prefix}.

\paragraph{Model of computation}
We adopt the computational model of \emph{Random Access Machines} (RAM) with register length of size $\Theta(\log(n))$ where $n$ is the {\em final} input size, and arithmetical computations on a single register are assumed to cost constant time. This is one of the standard models for streaming algorithms~\cite{ganardi_sliding_2022, munoz_streaming_2024, larsen_lower_2015}, so we do not linger on the implications of this choice. 

We want to provide precise guarantees on the space that the algorithm uses at each point of the computation. To this end  we assume the RAM has a lazy garbage collection system that only frees unused registers upon demand. At each step, we identify some {\em root registers} whose information is directly accessible in the algorithm, and we consider that the space utilized by the algorithm is the number of registers that are reachable from these root registers.
This implies that the algorithm might sometimes dispose of a single root register, and this effectively frees multiple registers in memory all at once.

\paragraph{Unary MSO queries} The evaluation task that we are set out to solve is, given a unary query expressed as an MSO formula $\varphi(x)$ with a single free first-order variable $x$ and a forest $F$ over some fixed alphabet $\Sigma$, output every node $v$ in $F$ such that $F,v\models\varphi(x)$. We write $\llbracket \varphi(x) \rrbracket(F)$ for the set of all such nodes in $F$.

A unary MSO query can be equivalently represented using an automaton $\mathcal{A}$ over the \emph{extended alphabet} $\Sigma^{\sf out} := \Sigma \cup \{\check{a} \mid a \in \Sigma \}$. For a node $v$ in $F$ let $F[v]$ be the result of replacing the label $a$ of $v$ by $\check{a}$, and define a function $\llbracket \mathcal{A} \rrbracket$ by $\llbracket \mathcal{A} \rrbracket(F) = \{v\mid \mathcal{A} \text{ accepts } F[v]\}$. Each MSO formula $\varphi(x)$ can be translated to a forest automaton $\mathcal{A}$ such that  $\llbracket \varphi(x) \rrbracket = \llbracket \mathcal{A} \rrbracket$~\cite{munoz_streaming_2024, LohreyS26}. In what follows we distinguish between {\em unmarked forests}, which are forests over $\Sigma$, and {\em marked forests}, which are over $
\Sigma^{\sf out}$ and contain exactly one node labeled $\check{a}$ for some $a\in\Sigma$.

\paragraph{Earliest query answering}
Consider a query $q$, given via an MSO formula or an automaton. We are interested in enumerating $\llbracket q \rrbracket(F)$ while reading ${\sf enc}(F)$, but we want to return node $v$ as soon as $v$ is guaranteed to be an answer, and to discard a candidate node $v$ as soon as $v$ is guaranteed to never become an answer. Let us formalize this. Consider a prefix $w$. We call node $v$
\begin{itemize}
    \item a {\em definite answer} if $v\in \llbracket q\rrbracket(F)$ for every forest $F$ that agrees with $w$;
    \item a {\em definite nonanswer} if $v\notin \llbracket q\rrbracket(F)$ for every forest $F$ that agrees with $w$;
    \item  {\em indefinite} or {\em candidate} otherwise; that is, when $v$ might become an answer in some forests agreeing with $w$ and a nonanswer in others.
\end{itemize}
We write the set of all candidate nodes for $w$ as ${\sf Cands}(q, w)$.

We are interested in the following kind of algorithms. The algorithm reads a streamed forest and keeps a collection of nodes in memory: upon reading the opening tag of a node, the node is either added to the collection, or returned as an answer, or discarded. The collection is also updated: some nodes are returned as answers, some are discarded, and some are kept. We require that each node $v$ is returned (resp. discarded) after reading the shortest prefix $w$ such that $v$ is a definite answer (resp. definite nonanswer) for $w$. This complies with the classical notion of earliest query answering as it was presented in~\cite{gauwin_09}.

Next, we refine this setting by bringing it closer to the machine model, in order to facilitate precise complexity guarantees. 

\paragraph{Streaming enumeration}
A {\em streaming evaluation} algorithm is an algorithm that receives a query $q$, and a stream $\mathcal{S}$ that reveals the symbols from an input word $w$ one by one from beginning to end---we say that $\mathcal{S}$ {\em streams} $w$. We consider a special function ${\sf yield}[S]$ that upon each call returns the next symbol of $w$, and after the last symbol it returns a flag {\sf end}. At the end of the stream, the algorithm should produce the result from evaluating $q$ over $w$.

In this work, we use a more refined notion that we call a {\em streaming enumeration} algorithm. We split the evaluation into alternating {\em update} phases, where the algorithm preprocesses a data structure that contains the answers, and {\em enumeration} phases, in which the answers are printed one by one in a dedicated write-only space. A full run of the algorithm will start with the first call to ${\sf yield}[\mathcal{S}]$, which marks the beginning of the first update phase; then it runs the first enumeration phase to print any definite answers; and then calls ${\sf yield}[\mathcal{S}]$ again to continue with the second update phase, and so on. When ${\sf yield}[\mathcal{S}]$ returns {\sf end}, the algorithm runs its last update and enumeration phases, and ends. By this point, the entire answer set should have been printed without repeats. In the literature, this type of evaluation algorithm is sometimes called an {\em enumeration algorithm with early output}~\cite{RiverosJV23}.

We measure the resources required by this algorithm across three dimensions:
\begin{itemize}
    \item {\em Update time. } This is the maximum number of steps taken in any update phase, namely, between any call to ${\sf yield}[\mathcal{S}]$, and the next enumeration phase.
    \item {\em Enumeration delay. } The maximum number of steps, across all enumeration phases, between: (1) beginning an enumeration phase and printing its first output, (2) printing any two consecutive outputs and (3) printing the last output of the current enumeration phase and the moment the phase ends.
    \item {\em Instantaneous space. } We measure the space at every step of the computation in terms of the {\em active registers} in memory: these are the ones that hold all the accessible information, as specified in the \emph{Model of computation} paragraph above. The algorithm uses instantaneous space $s(i)$ if this is the maximum number of active registers at the end of  $i$-th enumeration phase for any $i\in [1,|w|+1]$.
\end{itemize}

After this preamble we can state our main result. Let $f(\varphi)$ be some computable value that only depends on $\varphi$.

\begin{theorem}[Main result]
    There is a streaming enumeration algorithm that receives an MSO formula $\varphi(x)$ and a stream $\mathcal{S}$ streaming $w = {\sf enc}(F)$ and produces $\llbracket\varphi(x)\rrbracket(F)$. Further, this algorithm:
    \begin{itemize}
        \item works with $f(\varphi)$ update time,
        \item has constant-delay enumeration,
        \item and uses $O(\,|{\sf Cands}(\varphi(x), w[1,i])| + f(\varphi)\cdot{\sf depth}(w[1,i])\,)$ instantaneous space.
    \end{itemize}
\end{theorem}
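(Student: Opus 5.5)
We first translate $\varphi(x)$ into a right-to-left deterministic bottom-up forest automaton $\mathcal{A}$ over $\Sigma^{\sf out}$ with $\llbracket\varphi(x)\rrbracket=\llbracket\mathcal{A}\rrbracket$. All query-dependent work is then absorbed into $f(\varphi)$, which is a function of the size of $Q$. The algorithm rests on a simple observation. After reading a prefix $w = w_0\, c_1\langle\, w_1\cdots c_d\langle\, w_d$, every completion has the shape
\[
F = w_0\cdot c_1\bigl(w_1\cdot c_2(\cdots c_d(w_d\cdot G_d)\cdots)\cdot G_{1}\bigr)\cdot G_0 ,
\]
where $G_0,\dots,G_d$ are arbitrary forests. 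Since the automaton runs right-to-left, what matters about the unknown part is captured by \emph{functions} $Q\to Q$. Specifically, the well-nested block $w_i$ induces the function $q\mapsto\delta(w_i\cdot X_q)$, where $X_q$ is any forest evaluating to $q$. Reachability of all states lets us treat the $G_i$ as ranging over all of $Q$.

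\textbf{Boolean case.} For each depth level we maintain a stack frame storing the transformation $\tau_i: Q\to Q$ for the closed block $w_i$. We also store the composite relation
\[
R_i\subseteq Q\times\cdots
\]
which records which final outcomes are still achievable given the choices of $G_i,\dots,G_0$. Concretely, we keep the set $S_i\subseteq Q$ of states $q$ such that, if the subforest below level $i$ evaluates to $q$, some choice of the outer $G$'s leads to acceptance. We also keep the set of states such that every choice leads to acceptance. The updates are as follows. Reading $c\langle$ pushes a frame. Reading $\rangle$ pops a frame, composes its function with the letter $c$ into the parent's function, and recomputes the sets. Each update is a constant-size table operation. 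The input is decided as soon as the top-level "exists" or "forall" set becomes trivial.

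\textbf{Unary case.} The same stack is kept for unmarked runs. In addition, each candidate node $v$ carries the information of what its marked subtree or context contributes. For $v$ whose tag is still open at level $j$, or already closed inside block $w_i$, the relevant object is a function $Q\to Q$ for the \emph{marked} version of the enclosing block. We group candidates into classes indexed by these functions. Candidates in the same block with the same marked-block function behave identically from then on: they become definite answers or nonanswers together, since all later evolution is composition with shared unmarked data. So each frame stores, for each of the constantly many functions $g:Q\to Q$, a list of candidates having that marked function. When a frame is closed, the lists for each $g$ are merged into the parent's list for $g' = $ (the composition with the letter and parent context). Several lists may map to the same $g'$, so they are concatenated in $O(1)$ using linked lists with tail pointers. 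After each update, for every class at every level we would test whether the class is definitely accepted or rejected. Naively this costs $O(\text{depth})$, but only the top frame changed, and definiteness of a class depends on its function composed with the outer context. We therefore precompute, per frame, the set of "decided" marked functions for that frame's context, so that only the top frame's classes need rechecking. A decided class is detached as a whole list: accepted lists are handed to the enumeration phase, which walks them with constant delay, and rejected lists are simply dropped, which is constant time under the lazy garbage-collection model.

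\textbf{Main obstacle.} The difficulty is that lists from deeper levels can become decided only when an outer frame's context changes, which happens only on pop. So decided classes may lie at most at the top frame, which needs to be argued carefully. One must also enforce the space bound $O(|{\sf Cands}|+f(\varphi)\cdot{\sf depth})$. This requires that every stored node is genuinely a candidate, which is exactly the immediate detachment of decided classes, and that there are at most $|Q|^{|Q|}$ lists per frame. The subtle point is showing that merging by function never mixes nodes with different futures, and that the "decided" test per frame can be maintained in constant time when a frame's context function is updated. We expect this to need an argument that the context of frame $i$ depends only on frames $<i$, which are untouched while $i$ is on top. That data-structure interface is where most of the work lies.
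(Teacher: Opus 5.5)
\textbf{There is a genuine gap.} It is exactly at the step you flagged as the main obstacle. You claim that decided classes can only appear in the top frame, i.e.\ that the fate of a candidate stored at frame $i$ depends only on its marked-block function $g$ and on frames $<i$. That claim is false. The automaton runs right-to-left, so the argument fed into $g$ is the state of $c_{i+1}(w_{i+1}\cdots)\cdot G_i$. That state is constrained by everything already read in frames $i+1,\dots,d$. You only considered outer frames influencing deeper ones; the influence also goes the other way. Every step at the bottom of the stack can therefore change the status of classes at all shallower frames at once.

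Here is a concrete example. Take $\varphi(x)$ = ``$x$ is labelled $a$ and some node is labelled $b$'' and the stream $a\langle\rangle\, c\langle\, a\langle\rangle\, c\langle \cdots c\langle\, a\langle\rangle\, b\langle$. Before the last tag, every $a$-node is a candidate, and they sit in the blocks $w_0,\dots,w_d$ of different frames. The single tag $b\langle$ turns all of them into definite answers. Your scheme then fails in one of three ways:
\begin{itemize}
\item rechecking only the top frame misses them, so answering is not earliest;
\item rechecking all frames costs $O({\sf depth})$ update time;
\item even once detected, decided lists scattered over many frames cannot be handed to the enumeration phase with constant delay, since empty frames in between would have to be skipped.
\end{itemize}

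\textbf{The missing idea} is to store each candidate's summary relative to the \emph{current node}, not relative to its own frame. The paper keeps a triple $\langle\alpha_v,\beta_v,\gamma_v\rangle$ per candidate. For $v$ outside the subtree of the current node, $\alpha_v$ is the above-function of the \emph{marked} context. Then a single $\mathrm{Range}$ test per colour decides all candidates of that colour at any depth, and a single extract removes them all. This yields earliest answering with $f(\varphi)$ update time.

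The price is that on every opening tag all these triples are transformed by $\mathrm{Open}_c$, which is not injective. Candidates originating at different depths get merged, and on the matching closing tag each must recover its previous colour. That is why the paper needs the multistack with per-element colour histories (multi-push, multi-pop, multi-swap). It implements them in constant time with buckets, move logs, and invalidation events, so that extracted candidates are not resurrected by a later pop. Your per-frame lists avoid this history problem only because they never lift candidates to the current position. That is precisely what makes a constant-time global definiteness test impossible in your scheme.
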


\section{Warmup: Boolean queries} \label{sec:boolean}

Suppose that our goal is to solve the problem of earliest query answering for \emph{boolean queries}. That is, we have a fixed automaton $\mathcal{A} = (\Sigma, Q, \delta, {\sf init}, {\sf Final})$, we read a streamed encoding of a forest over $\Sigma$, and our task is to determine if the encoded forest is accepted or rejected, and we have to answer \emph{yes} or \emph{no} as soon as the answer of the automaton is determined by the already processed prefix of the encoding. 
We note that this result is implicit in the work of Gauwin et al.~\cite{gauwin_09}; it is only presented here for illustration purposes.

\begin{theorem}
There is an algorithm that receives an MSO formula $\varphi$ and a stream $\mathcal{S}$ streaming $w = {\sf enc}(F)$ and answers whether $F\models \varphi$ as early as possible, with constant update time and $f(\varphi)\cdot{\sf depth}(w[1,i])$ instantaneous space.
\end{theorem}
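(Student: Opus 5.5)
Convert $\varphi$ into a deterministic bottom-up right-to-left automaton $\mathcal{A}=(\Sigma,Q,\delta,{\sf init},{\sf Final})$ and work in the transition monoid of "forest contexts". For a prefix $w = w_0\, c_1\langle\, w_1 \cdots c_d\langle\, w_d$, every agreeing forest has the form
\[
F = w_0 \cdot c_1\big(w_1 \cdot c_2( \cdots c_d(w_d\cdot G_d)\cdot G_{d-1}\cdots)\cdot G_1\big)\cdot G_0 ,
\]
where the $G_j$ are arbitrary forests and each $w_j$ is read as the forest it encodes. Since $\delta$ works right to left, the value of $F$ is obtained by composing functions $Q\to Q$. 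Write $f_u(q)$ for the state of $u\cdot H$ when $\delta(H)=q$, so for a well-nested $u$ it maps the state of what follows to the state of the whole. For a level $j$ let $g_j(p,q)=f_{w_{j-1}}(\delta(c_j,p,q))$. Then $\delta(F)=g_1\big(g_2(\cdots g_d(f_{w_d}(\delta(G_d)),\delta(G_{d-1}))\cdots),\delta(G_0)\big)$ with $g_1$ using $f_{w_0}$. Because all states are reachable, the $\delta(G_j)$ range independently over all of $Q$. Hence the set of reachable final states is $R(w) = g_1(g_2(\cdots g_d(f_{w_d}(Q),Q)\cdots,Q),Q)$, where a function applied to a set is applied pointwise. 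The answer is forced to ``yes'' iff $R(w)\subseteq{\sf Final}$ and to ``no'' iff $R(w)\cap{\sf Final}=\emptyset$. So earliest answering amounts to maintaining this set after each tag.

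To do this with constant update time, keep a stack with one frame per open tag, so the space is $O(f(\varphi)\cdot{\sf depth})$. Frame $j$ stores the label $c_j$, the function $f_{w_j}\colon Q\to Q$ for the well-nested part read so far at that level, and a precomputed set $S_j\subseteq Q$. This $S_j$ is the set of states $s$ such that a subforest state $s$ at level $j$ (i.e.\ $s$ being the state of $w_j\cdot G_j$) can lead to a final state, given the outer context. Formally, $S_0={\sf Final}$, and $S_j=\{s : \exists q\in Q,\ f_{w_{j-1}}(\delta(c_j,s,q))\in S_{j-1}\}$. This is fixed once $c_j\langle$ is pushed, since $w_{j-1}$ is then frozen. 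Dually, define $T_j$ as the states at level $j$ from which one can reach a non-final state. Then $R(w)\cap{\sf Final}\neq\emptyset$ iff $f_{w_d}(Q)\cap S_d\neq\emptyset$, and similarly for non-final states with $T_d$. Each of these checks takes time depending only on $|Q|$.

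The updates are as follows. On an opening tag $a\langle$, push a frame with $c_{d+1}=a$, $f_{w_{d+1}}={\rm id}$, and $S_{d+1},T_{d+1}$ computed from $S_d,T_d,f_{w_d},a$ in $O(|Q|^2)$ time. On a closing tag, pop frame $d$ and set the new $f_{w_{d-1}} := f_{w_{d-1}}\circ (q\mapsto \delta(c_d, f_{w_d}({\sf init}), q))$. This is correct because the closed child block is now completely known: its state is $f_{w_d}({\sf init})$, while its right siblings are still unknown. Then recheck the two conditions and output ``yes'' or ``no'' the first time one of them is forced. At end of stream, $R(w)$ is a single state and the check is exact. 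One subtlety is that $f_{w_j}$ must be composed on the correct side. Since $w_{j-1}$ appears to the left of the new block, right-to-left evaluation means the new block's function is applied first, and this matches the formula above.

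The step I expect to need the most care is the correctness claim that ``all completions'' corresponds exactly to letting the $\delta(G_j)$ range independently over $Q$. For this I need to check that every agreeing forest has exactly the displayed decomposition with arbitrary and independent $G_j$. I also need the convention that a prefix ends with a tag symbol, so that no completion can change the already-read labels $c_j$. Reachability of all states then gives surjectivity of $G\mapsto\delta(G)$ onto $Q$, so the sets $S_j$ and $T_j$ characterize definiteness exactly and no earlier answer is possible.
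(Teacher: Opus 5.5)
Your proposal is correct and is essentially the paper's proof: your frames $(c_j,f_{w_j},S_j,T_j)$ are the paper's triples $\langle\alpha,\beta,c\rangle$ with $\beta=f_{w_j}$ and the same closing update $\beta''(q)=\beta(\delta(c',\beta'({\sf init}),q))$, while $S_j$ (resp.\ $T_j$) is just the paper's above function $\alpha_j$ pulled back through the transition at $c_j$, i.e.\ $S_j=\{s : \exists q,\ \top\in\alpha_j(\delta(c_j,s,q))\}$ (resp.\ with $\bot$). One small slip: at the end of the stream $R(w)=f_{w_0}(Q)$ is in general not a singleton, because forests agreeing with $w$ may still append further roots, so when ${\sf yield}$ returns ${\sf end}$ without a forced answer you should answer by testing $f_{w_0}({\sf init})\in{\sf Final}$ directly.
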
\label{thm:boolean-case}

Reading the term encoding of a forest $F$ corresponds to a DFS on the forest, with children of a given node processed left-to-right. 
Having read a prefix $w$ of the term encoding of forest $F$ we are in a node $\widehat w$ of $F$ that corresponds to the most recent opening tag that has not been closed yet. If no such tag exists, we are in the \emph{virtual root} of the forest, labeled with a special symbol $\$$ such that $\delta(\$,q,q') = q$. 
We shall refer to $\widehat w$ as the \emph{current node}. By the \emph{current path} we mean the path from the root of the tree containing the current node to the current node. The \emph{visited} part of the forest consists of nodes whose opening tag we have already read---namely, the nodes in $w$. All visited nodes can be split into three categories: we have the current node ($C$, for current), the visited proper descendants of the current node ($B$, for below), and the remaining visited nodes ($A$, for above). Nodes from $B$ form a subforest of $F$, and nodes from $A$ form a {\em context} with multiple holes. Indeed, each node in $A$ that lies on the current path has a hole as its rightmost child, directly to the right of the next node on the current path. The last node in $A$ on the current path is the parent of $C$; it has a hole instead of $C$. We refer to it as the \emph{current hole} (see~\autoref{fig:dec}).

\begin{figure}
    \centering
    \includegraphics[width=0.3\linewidth]{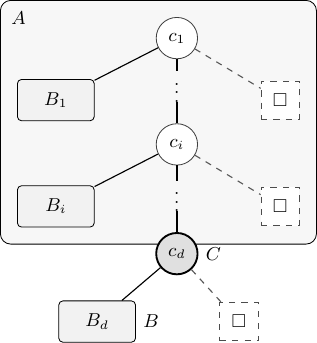}
    \caption{Decomposition of a tree into parts $A$, $B$ and $C$}
    \label{fig:dec}
\end{figure}

The nodes in the current path correspond to the open tags of $w$, which are labeled $c_1,\ldots,c_d = c$, and the already-seen subforests that descend from each are named $B_1,\ldots,B_d = B$. The union of all of these subforests, minus $C$ and $B$, corresponds to $A$. We can decompose  $A$ into a sequence of contexts: we call $A_0$ a context that is only a hole, and $A_i$ the context with $i+1$ holes that is obtained by plugging a node labelled $c_i$ with $B_i$ and a hole as children, and another hole as its right sibling.

To a forest $F$ we associate a function $\widehat F:Q\to Q$. We define it recursively by three cases: if $F$ is empty, then $\widehat F(p) = p$; if $F$ is the concatenation of two forests $F_1$ and $F_2$, then we define $\widehat F := \widehat F_1 \circ \widehat F_2$; if $F$ is a tree with a root labeled $c$ whose children form the forest $F_0$, we define $\widehat F(p) := \delta(c, \widehat F_0(\mathsf{init}), p)$. 
Intuitively, this function represents the final state that is reached once we run the automaton on $F$ after replacing its ``initial state from the right'' by $p$.
Note that this function is well defined given that the automaton is deterministic and trimmed.

We extend this definition to work with {\em $d$-ary contexts}, which extend forests by including $d$ nodes that are labeled with a special symbol $\square$. They are defined as follows: (1) a forest concatenated with a single node labeled $\square$ is a 1-ary context; (2) the result of concatenating a forest, a tree where the children of the root form a $d$-ary context, and a node labeled $\square$ is a $(d+1)$-ary context.

Now, to a $d$-ary context $K$ we associate a function $\widehat K:Q^d\to Q$. If $d = 1$ and $K = F
\cdot \square$ then $\widehat K = \widehat F$ ; otherwise, let $K = F \cdot c(K_1) \cdot \square$, then $\widehat K(q_1,\ldots,q_d) := \delta(c,\widehat F(\widehat K_1(q_2,\ldots,q_d)),q_1)$.

These functions help us model the information that we store in a run of the algorithm. At any step of a given run, if the depth of the current node is $d$, then the prefix read so far is associated to a $d$-ary context $K_{\sf curr}$. It can be visualized by appending $d$ copies of the string ``$\square{\sf \,\langle \,\rangle\,\rangle}$'' at the end and decoding the resulting forest.

The information that we store at each point of the algorithm is a function $\alpha:Q\to 2^{\{\bot, \top\}}$ that we define as follows
\begin{itemize}
\item $\top \in \alpha(q)$ iff there exist $q_1,\ldots,q_{d-1}$ such that $\widehat K_{\sf curr}(q_1,\ldots,q_{d-1},q) \in \mathsf{Final}$;
\item $\bot \in \alpha(q)$ iff there exist $q_1,\ldots,q_{d-1}$ such that $ \widehat K_{\sf curr}(q_1,\ldots,q_{d-1},q) \not\in \mathsf{Final}$;
\end{itemize} 
along with a function $\beta:Q\to Q$ defined as $\beta = \widehat B$.  In other words, it corresponds to the function associated to the forest that descends from the current node $c$ and is part of the current prefix.

For a given prefix $w$, we will call the functions $\alpha$ and $\beta$ that would be stored after processing $w$ with the algorithm above as the {\em above} function of $w$ and the {\em below} function of $w$.

\begin{lemma}
\label{lem:boolean:earliest}
Let $w$ be a prefix, let $\alpha, \beta$ be its {\em above} and {\em below} functions, and let $c$ be its last open tag:
\begin{itemize}
    \item  $\alpha ( \delta(c, \beta(q), q')) = \{\top\}$ for all $q,q'\in Q$ iff every forest that agrees with $w$ is accepted;
    \item  $\alpha ( \delta(c, \beta(q), q')) = \{\bot\}$ for all $q,q'\in Q$ iff every forest that agrees with $w$ is rejected.
\end{itemize}
\end{lemma}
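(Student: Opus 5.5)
The plan is to describe exactly which values the automaton can compute on forests that agree with $w$, and then to read both equivalences off the definition of $\alpha$. Let $d={\sf depth}(w)$ and let $K_{\sf curr}$ be the $d$-ary context of $w$. Take any forest $F$ agreeing with $w$. The remainder of ${\sf enc}(F)$ after $w$ splits uniquely as follows:
\begin{itemize}
\item a well-nested word $u_d$ finishing the children of the current node $c$, followed by $\rangle$;
\item a well-nested word $u'_d$ giving its right siblings, followed by $\rangle$ (closing $c_{d-1}$);
\item for each $i<d$, a well-nested word $u_i$ of right siblings of $c_i$, followed by $\rangle$.
\end{itemize}
So $F$ is obtained from $K_{\sf curr}$ in two steps. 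First, the hole corresponding to $c$ is filled by the tree $c(B\cdot G)$ followed by a forest $H$, where $G$ and $H$ are the forests encoded by $u_d$ and $u'_d$. Second, each remaining hole $i$ is filled by a forest $H_i$.

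The first key step is a compositionality identity. By the recursive definitions of $\widehat K$ and $\widehat F$, we get
\[\delta(F)=\widehat K_{\sf curr}\bigl(\delta(H_1),\ldots,\delta(H_{d-1}),\,\delta(c,\widehat B(\delta(G)),\delta(H))\bigr),\]
with the last argument placed in the slot of the current hole. I will prove this by induction on $d$, unfolding $\widehat K = \delta(c_1,\widehat F(\widehat K_1(\ldots)),q_1)$, and using $\widehat{F_1\cdot F_2}=\widehat F_1\circ\widehat F_2$ together with $\widehat{F}(p)=\delta(F\text{ with right-init }p)$. The main obstacle is bookkeeping: checking that the argument order in the definition of $\widehat K$ matches which hole is "current". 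I would settle this first and fix a convention: the current hole is the argument that $\alpha$ quantifies as $q$, and the other $d-1$ holes are $q_1,\ldots,q_{d-1}$.

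The second step is to use reachability. Every state is reachable, so the tuple $(\delta(H_1),\ldots,\delta(H_{d-1}),\delta(G),\delta(H))$ ranges over all of $Q^{d+1}$ as $F$ ranges over forests agreeing with $w$. Moreover, these components can be chosen independently, since any forests $G,H,H_i$ give a valid completion. Writing $q=\delta(G)$, $q'=\delta(H)$ and $\beta=\widehat B$, the set of acceptance outcomes of completions that realize a fixed pair $(q,q')$ is therefore exactly $\alpha(\delta(c,\beta(q),q'))$. This follows from the definition of $\alpha$, in which $q_1,\ldots,q_{d-1}$ range over all states.

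Finally, every forest agreeing with $w$ is accepted iff no choice of $q,q',q_1,\ldots,q_{d-1}$ yields a non-final state. By the previous step this holds iff $\bot\notin\alpha(\delta(c,\beta(q),q'))$ for all $q,q'$. Since each such value of $\alpha$ is nonempty (it is witnessed by some choice of the $q_i$), this is the same as saying it equals $\{\top\}$. The second bullet is symmetric. The virtual-root case $d=0$ is handled by the convention $\delta(\$,q,q')=q$ and $\alpha$ being defined with no quantified states.
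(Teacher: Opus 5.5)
Your proof is correct and follows the same route as the paper's. You split the completion into the pieces that fill the holes of $K_{\sf curr}$, use reachability to realize every tuple of states independently, and read both bullets off the definition of $\alpha$; you also make explicit the compositionality identity and the nonemptiness of each $\alpha(p)$, which the paper's sketch leaves implicit.

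One caveat concerns that identity. It holds for the recursion $\widehat K(q_1,\ldots,q_d)=\widehat F\bigl(\delta(c,\widehat K_1(q_2,\ldots,q_d),q_1)\bigr)$, not for the displayed formula $\delta(c,\widehat F(\widehat K_1(\ldots)),q_1)$ that you propose to unfold. That formula wrongly places the leading forest $F$ under $c$, so your induction breaks whenever $F$ is nonempty.
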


\begin{proof}
    Suppose $\alpha ( \delta(c, \beta(q), q')) = \{\top,\bot\}$ for some $q,q'\in Q$. We deduce that there exist states that accept and states that reject. We use the fact that the automaton is trimmed to find forests that render those states, and we construct a way of completing $w$ that is accepted and one that is rejected. 

    Let $w = w_0\, c_1\!\mathop{\langle}w_1\,c_2\!\mathop{\langle} \cdots  \,w_{d-1} \, c_d \!\mathop{\langle} w_B$ where $c_d=c$ and every $w_i$ is well-nested. Consider an arbitrary suffix $w_B' \mathop{\rangle} w_d' \mathop{\rangle} \ldots \mathop{\rangle} w_1' \mathop{\rangle}$ that completes the prefix into a forest $F$ that is accepted (resp. rejected). Relying on the notation from the definition of $\alpha$, we can use the states that are reached after reading $w_1',\ldots,w_d'$ for  $q_1,\ldots,q_d$, and conclude that $\top\in \alpha( \delta(c, \beta(q), q')) $ (\,resp. $\bot\in \alpha( \delta(c, \beta(q), q')) $\,).
\end{proof}

It remains to show how to maintain $c$, $\alpha$, and $\beta$. We rely on the following two lemmas.

\begin{lemma}
\label{lem:boolean:down}
Let $w' = w\, c' \langle\, w_d$ and $w'' = w\, c'\langle\, w_d \, c''\langle$, where  $w_d$ is well-nested.
Let  $\alpha'$ and $\beta'$ be the functions for the {\em above} and {\em below} parts corresponding to $w'$, and similarly, let $\alpha''$ and $\beta''$ be the functions for the {\em above} and {\em below} parts corresponding to $w''$. Then, 
\[ 
\alpha''(q) = \bigcup_{q'\in Q} \alpha' ( \delta(c', \beta'(q), q'))
\quad \textrm{and} \quad \beta'' = \mathrm{id}\,.\] 
For the degenerate case of $w'=w_d$ and $w''=w_d\, c''\langle$ with $w_d$ well-nested, the formulas above hold with $\alpha'=\alpha_{\mathsf{root}}$ and $c'=\$$, where
\[\alpha_{\mathsf{root}}(q)= 
\begin{cases} 
\{\top\} & q \textrm{ is final}\,,\\ 
\{\bot\} & q \textrm{ is not final}\,.
\end{cases}
\]   
\end{lemma}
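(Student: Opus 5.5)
The plan is to unfold the definitions of the above and below functions and relate the context $K''$ of $w''$ to the context $K'$ of $w'$. Write $d=\mathsf{depth}(w')$, so $\mathsf{depth}(w'')=d+1$. The context $K'$ is $d$-ary, with its last hole sitting directly below the node $c'$, where the continuation of $B'$ would go. By definition, $\beta'=\widehat{B'}$, where $B'$ is the forest encoded by $w_d$.

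For $w''$, the visited forest below $c'$ is $B'$ followed by the new node $c''$. The new context $K''$ is therefore obtained from $K'$ by plugging into the last hole of $K'$ the tree $c'(B'\cdot c''(\square)\cdot\square)$, together with a fresh hole to the right of $c'$. Here I am reading the hole structure off the definition of $(d+1)$-ary contexts. The first step is to write out $\widehat{K''}$ using the recursive definition.

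The last two arguments of $\widehat{K''}$ play these roles:
\begin{itemize}
\item the argument $q$ is the state coming from the right of the children of $c''$;
\item the argument $q'$ is the state coming from the right sibling of $c'$.
\end{itemize}
The new below forest is empty, so $\beta''=\widehat{\varepsilon}=\mathrm{id}$ immediately. Unfolding one level of the recursion, with careful attention to which argument goes where, should give
\[
\widehat{K''}(q_1,\dots,q_{d-1},q',q)=\widehat{K'}\big(q_1,\dots,q_{d-1},\,\delta(c',\beta'(q),q')\big).
\]
To justify this, I would prove by induction on the arity a small plugging lemma. It says that plugging a tree into the last hole of a context composes the associated functions in the last coordinate, and it follows by the same recursion that defines $\widehat K$.

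Given this identity, the formula for $\alpha''$ follows directly from the definition of $\alpha$. We have $\top\in\alpha''(q)$ iff there exist $q_1,\dots,q_{d-1},q'$ with $\widehat{K''}(\dots)\in\mathsf{Final}$. This holds iff there exists $q'$ such that $\top\in\alpha'(\delta(c',\beta'(q),q'))$, and symmetrically for $\bot$. Taking the union over $q'$ gives exactly the stated formula.

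In the degenerate case $w'=w_d$ the depth is $0$. The virtual root $\$$ with $\delta(\$,p,q')=p$ makes $\delta(\$,\beta'(q),q')=\widehat{B'}(q)$, which is the state of the whole forest when the new node's subtree evaluates appropriately. We have $\alpha_{\mathsf{root}}(p)=\{\top\}$ or $\{\bot\}$ according to whether $p$ is final, which matches the 1-ary context $K''=B'\cdot c''(\square)\cdot\square$ once the spurious right-hole argument is ignored.

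The main obstacle is bookkeeping rather than any idea. I must get the order of hole arguments in the $(d+1)$-ary context right, since the paper's recursive definition puts the outermost right hole first. I must also check that the innermost hole of $K''$ is indeed the "right of children of $c''$" and not "right of $c''$". I would settle this first by working out the case $d=1$ explicitly and then generalizing via the plugging lemma.
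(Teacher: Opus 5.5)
Your overall route is the paper's: express the new above context as the old one with its deepest hole filled, unfold one level of $\widehat{\,\cdot\,}$, and read off $\alpha''$. The displayed identity $\widehat{K''}(q_1,\dots,q_{d-1},q',q)=\widehat{K'}(q_1,\dots,q_{d-1},\delta(c',\beta'(q),q'))$ is exactly what is needed, and your derivation of $\alpha''$ and $\beta''=\mathrm{id}$ from it is correct. The gap is that the contexts you describe do not produce this identity. Since the lemma is nothing but this bookkeeping, that is the step that matters.

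First, $K'$. Because $\alpha'$ is evaluated at $\delta(c',\beta'(q),q')$, which is the state of the node $c'$ itself, the deepest hole of $K'$ must sit at the position of $c'$. So $c'$ and $B'$ are not part of $K'$; they are carried separately by $\gamma$ and $\beta'$. If, as you first say, the last hole were below $c'$ after $B'$, the formula would count $c'$ and $B'$ twice.

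Second, $K''$. The new current node $c''$ is not part of $K''$ either. One plugs $c'(B'\cdot\square)\cdot\square$ into that hole, as the paper does, not $c'(B'\cdot c''(\square)\cdot\square)\cdot\square$. Your forest has three holes, so $K''$ would be $(d+2)$-ary. Your plugging lemma would then give $\widehat{K'}(q_1,\dots,q_{d-1},\delta(c',\beta'(\delta(c'',q,r)),q'))$, with an extra argument $r$ for the hole right of $c''$. That is not your identity.

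Correspondingly, $q$ is not the state entering the children of $c''$ from the right. It is the state of the node $c''$, equivalently the state entering $B'$ from the right, which is exactly why it is fed to $\beta'$. So the check you plan at the end, ``right of children of $c''$'' versus ``right of $c''$'', tests two wrong options: the innermost hole of $K''$ is the position of $c''$ itself.

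The same correction fixes the degenerate case. There $K''=B'\cdot\square$ is genuinely $1$-ary, with no spurious hole to ignore, and $\widehat{K''}=\beta'$. Then $\delta(\$,\beta'(q),q')=\beta'(q)$ gives $\alpha''(q)=\alpha_{\mathsf{root}}(\beta'(q))$, as required.

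The confusion is understandable. The paper's informal picture of $K_{\mathsf{curr}}$ (appending copies of $\square\langle\,\rangle\rangle$) puts the current node inside the context. However, Lemma~\ref{lem:boolean:earliest} and the recursive definition of contexts (whose $1$-ary case is $F\cdot\square$) force the reading in which the current node is the innermost hole. With that reading fixed, your plugging lemma applied to the $2$-ary context $c'(B'\cdot\square)\cdot\square$ yields your identity directly.
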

\begin{proof}
Let $A'$ be the $d$-ary context constituting the above part for $w'$ and $B'$ the forest constituting the below part for $w'$. Similarly,  let $A''$ be the $(d+1)$-context constituting the {\em above} part for $w''$ and $B''$ the forest constituting the {\em below} part for $w''$. Then, $B''$ is empty and  $A''$ is obtained from $A'$ by substituting its deepest node labeled $\square$ with $c(B'\cdot \square)\cdot\square$. The formulas for $\alpha''$ and $\beta''$ follow immediately. The argument for the degenerate case is analogous.
\end{proof}

\begin{lemma}
\label{lem:boolean:up}
Let $w' = w\, c'\langle\,w_d$ and $w'' = w\, c'\langle\, w_d\,\rangle$, where  $w_d$ is well-nested. Using the naming convention from Lemma~\ref{lem:boolean:down}, we have 
\[ 
\alpha'' = \alpha 
\quad \textrm{and} \quad \beta'' (q) = \beta(\delta(c',\beta'({\sf init}),q))\,.\]
\end{lemma}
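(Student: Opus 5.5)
Here $w = w_0\, c_1\langle\, w_1 \cdots c_{d-1}\langle\, w_{d-1}$ has depth $d-1$, and $\alpha,\beta$ are its above and below functions. The prefix $w'' = w\, c'\langle\, w_d\,\rangle$ again has depth $d-1$ and the same open tags $c_1,\dots,c_{d-1}$. I will decompose both sides into contexts and forests, as in the proof of Lemma~\ref{lem:boolean:down}, and compare them.

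Let $A$ be the $(d-1)$-ary above context of $w$ and $B$ its below forest. Let $A'$ be the $d$-ary above context of $w'$ and $B'$ its below forest; $B'$ is the forest encoded by $w_d$. For $w''$ the current node is again the node of $c_{d-1}$. The nodes outside the current subtree are exactly those of $w$ outside its current subtree, since the only new node (labelled $c'$) and its descendants lie below $c_{d-1}$. Hence the above context $A''$ equals $A$ as a context, so $\widehat{A''}=\widehat A$, and the definition of $\alpha$ depends only on this function, giving $\alpha''=\alpha$. In the degenerate case $d-1=0$, $A$ is the trivial context handled via the virtual root and $\alpha_{\sf root}$, and the same argument applies.

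For the below part, $B'' = B\cdot c'(B')$. By the concatenation clause of the definition of $\widehat F$,
\[
\widehat{B''} = \widehat B\circ \widehat{c'(B')}.
\]
By the tree clause, $\widehat{c'(B')}(q)=\delta(c',\widehat{B'}(\mathsf{init}),q)$. Substituting $\beta=\widehat B$ and $\beta'=\widehat{B'}$ gives $\beta''(q)=\beta(\delta(c',\beta'(\mathsf{init}),q))$.

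The only real care needed is justifying $B''=B\cdot c'(B')$ from the encodings. The subforest below $c_{d-1}$ in $w''$ is encoded by $w_{d-1}\, c'\langle\, w_d\,\rangle$. Since $w_{d-1}$ and $w_d$ are well-nested, this decodes to $B\cdot c'(B')$ by the recursive definition of ${\sf enc}$. One checks this by induction on the number of roots of $B$: ${\sf enc}(F\cdot G)={\sf enc}(F)\,{\sf enc}(G)$. Everything else is a direct unfolding of the definitions.
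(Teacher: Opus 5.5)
Your proof is correct and follows the paper's argument: the paper also observes that the above context is unchanged ($A''=A$) and that $B''=B\cdot c'(B')$, and both formulas then follow by unfolding the definition of $\widehat{F}$. Your indexing, with $w$ of depth $d-1$ rather than the paper's $d$, is a harmless relabelling, and your justification of $B''=B\cdot c'(B')$ from the encoding spells out a step the paper leaves implicit.
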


\begin{proof}
Let $A$ be the $d$-ary context constituting the {\em above} part for $w$ and $B$ the forest constituting the {\em below} part for $w$.  Similarly,  let $A'$ be the $(d+1)$-ary context constituting the {\em above} part for $w'$ and $B'$ the forest constituting the {\em below} part for $w'$. Finally, let $A''$ be the $d$-ary context constituting the {\em above} part for $w''$ and $B''$ the forest constituting the {\em below} part for $w''$. Then, $A = A''$ and $B'' = B\cdot c'(B')$.
Again, the formulas for $\alpha''$ and $\beta''$ follow immediately.
\end{proof}

Based on Lemmas~\ref{lem:boolean:down}~and~\ref{lem:boolean:up} we can maintain $c$, $\alpha$, and $\beta$ using a stack. Before going down we push the current $\langle \alpha, \beta, c\rangle$ on the stack and compute new $\alpha$ and $\beta$ using Lemma~\ref{lem:boolean:down}. The new $c$ is read from the stream. Before going up, we pop old $\langle\alpha, \beta,c\rangle$ from the stack, and compute new $\alpha$ and $\beta$ using Lemma~\ref{lem:boolean:up}. Before reading the first tag, we also check if the automaton accepts everything or nothing, and if this is the case we answer accordingly. If not, we process the tags one by one, maintaining $\alpha$, $\beta$, $c$ and return an answer as soon as the criteria in Lemma~\ref{lem:boolean:earliest} becomes satisfied. 
It is evident the updates can be made in constant time and the space taken---size of the stack---is proportional to the tree depth. Thus, \autoref{thm:boolean-case} is proven.

\section{Unary queries}  \label{sec:unary}

We now generalize the described solution to unary queries. We assume that a unary query is represented by an automaton $\mathcal{A} = (\Sigma^{\mathsf{out}}, Q, \delta, {\sf init}, {\sf Final})$ that inputs a forest with a single node marked as the answer node. A node $v$ in forest $F$ is returned by the query iff the automaton accepts the forest obtained from $F$ by marking $v$, as explained in Section~\ref{sec:preliminaries}. 
 
 Compared to the Boolean case, instead of a single triple $\langle\alpha, \beta, \gamma\rangle$, we need to maintain one triple for each possible marking of the part of the forest read so far. 
Let $\langle\alpha_v, \beta_v, \gamma_v\rangle$ be the triple for the marking that distinguishes node $v$, and $\langle\alpha_\emptyset, \beta_\emptyset, \gamma_\emptyset\rangle$ the triple for the marking that distinguishes no nodes (meaning that the node to be returned has not been read yet). Without loss of generality we can assume that both $\alpha_v$ and $\beta_v$ indicate whether the corresponding part of the forest contains the marked node or not: for this, it suffices to assume that the automaton remembers whether a marked node has already been seen; similarly, $\gamma_v$ is either a label $c$ or a marked label $\check c$. Let $\mathsf{Above}$, 
$\mathsf{Below}$, and 
$\mathsf{Current}$ be the sets of triples $\langle\alpha, \beta, \gamma\rangle$ for which the marked node is in the part of the forest corresponding to $\alpha$, $\beta$, and $\gamma$, respectively. Let $\mathsf{Marked} = \mathsf{Above} \,\cup\,  \mathsf{Below} \,\cup\, \mathsf{Current}$ and let $\mathsf{Unmarked}$ be the set of triples corresponding to unmarked prefixes.

As we continue reading the forest, the triples  $\langle\alpha_v, \beta_v, \gamma_v\rangle$ and $\langle\alpha_\emptyset, \beta_\emptyset, \gamma_\emptyset\rangle$ evolve just like in the Boolean case. Moreover, with each opening tag read, we create a new triple, corresponding to the marking that distinguishes the current node, and insert it into our collection. Similarly to the Boolean case, node $v$ can be returned or discarded as soon as $\langle\alpha_v, \beta_v, \gamma_v\rangle$ satisfies the conditions spelled out in Lemma~\ref{lem:boolean:earliest}. In either case, we consider this to be a deletion of $v$ along with $\langle\alpha_v, \beta_v, \gamma_v\rangle$ from our collection. The challenge is to implement this algorithm with constant cost of updates, deletes, and inserts. Importantly, not only the cost of a single operation must be constant (this is for free), but the total cost of all operations to be performed when a tag is read. For this purpose, we organize the tuples into a suitable data structure.

\subsection{Data structure} 
\label{ssec:data-structure}
The data structure stores candidate nodes grouped by their associated triples: at any moment, given a triple $\langle \alpha, \beta, \gamma \rangle$, it provides access to candidates $v$ such that $\langle \alpha_v, \beta_v, \gamma_v\rangle = \langle \alpha, \beta, \gamma\rangle$. The triple $\langle \alpha_\emptyset, \beta_\emptyset, \gamma_\emptyset \rangle$ will be stored outside of the data structure. The data structure must support adding new candidates, regrouping the candidates as their triples evolve, and extracting candidates with indicated triple. In order to support the evolution of triples, we need access to their previous values. 

Instead of presenting the outputs via constant-delay enumeration, we will do it via iterators. This is a subtler notion, as they navigate precisely the set of outputs without performing any other extra operations that might be hidden in the constant delay.

Abstractly, the \emph{multistack} data structure maintains a set $X$ of elements along with a labelling $h: X \to \mathcal{K}^+$ for a fixed alphabet $\mathcal{K} = \mathcal{K}_1 \,\cup\, \mathcal{K}_2$ with $\mathcal{K}_1 \,\cap\, \mathcal{K}_2 = \emptyset$. We think of the word $h(x)$ as a private stack of element $x$, growing to the right. The color of element $x$ is the element at the top of its stack: if $h(x) = k_1k_2\cdots k_n$ then the color of $x$ is $k_n$.  The following constant-time operations are supported:
\begin{itemize}
\item {\sbweight Create} a new empty structure;
\item {\sbweight Insert} an element $x$ of color $k\in \mathcal{K}_1$: add $x$ to $X$ and let $h(x)=k$;
\item {\sbweight Extract} elements of color $k\in \mathcal{K}$: remove them from $X$ and return an iterator giving access to these elements;
\item {\sbweight Multi-push} according to $f: \mathcal{K} \to \mathcal{K}_2$: append $f(k)$ to $h(x)$ for each element of color $k$;
\item {\sbweight Multi-pop}: remove the last symbol from $h(x)$ for each $x$ with $|h(x)| >1$;
\item {\sbweight Multi-swap} according to a function $f: \mathcal{K} \to \mathcal{K}$ such that $f(\mathcal{K}_1)\subseteq \mathcal{K}_1$ and $f(\mathcal{K}_2)\subseteq \mathcal{K}_2$: replace with $f(k)$ the last label of $h(x)$ for each element $x$ of color $k$.
\end{itemize}

Above, $f(\mathcal{K}_i)$ stands for $\{f(k) \bigm| k\in \mathcal{K}_i\}$. Note that the extract operation returns an iterator in constant time, but actually iterating over its elements will take linear time. 

From the definitions of the operations if follows that the  multistack satisfies the following invariant:
\begin{itemize}
\item all elements of colors from $\mathcal{K}_1$ have history of length  1;
\item all elements of colors from $\mathcal{K}_2$ have history of length at least 2.
\end{itemize}
See~\cref{fig:api_ex} for an illustration of the data structure operation.
\begin{figure}
    \centering
    \scalebox{0.9}[0.9]{
    \begin{tikzpicture}[scale=1.0,
  x=0.72cm,
  y=0.70cm,
  every node/.style={font=\scriptsize},
  cell/.style={
    draw=black,
    fill=white,
    minimum width=5mm,
    minimum height=3.8mm,
    inner sep=0pt
  },
  top cell/.style={
    cell,
    fill=black!12,
    line width=0.75pt
  },
  elem/.style={
    anchor=east,
    inner sep=1pt
  },
  op/.style={
    anchor=east,
    align=right,
    font=\scriptsize
  },
  sep/.style={
    draw=black!35,
    line width=0.35pt
  }
]

% left column: insert
\node[op] at (-1.35,-1.0) {$\mathrm{insert}(x_4,k_b)$};
\node[elem] at (-0.65,-0.45) {$x_1$};
\node[cell] at (-0.1,-0.45) {$k_a$};
\node[top cell] at (0.6,-0.45) {$k_b$};
\node[elem] at (-0.65,-1.0) {$x_2$};
\node[top cell] at (-0.1,-1.0) {$k_a$};

\node[elem] at (2.35,-0.45) {$x_1$};
\node[cell] at (2.9,-0.45) {$k_a$};
\node[top cell] at (3.6,-0.45) {$k_b$};
\node[elem] at (2.35,-1.0) {$x_2$};
\node[top cell] at (2.9,-1.0) {$k_a$};
\node[elem] at (2.35,-1.55) {$x_4$};
\node[top cell] at (2.9,-1.55) {$k_b$};

% left column: extract
\node[op] at (-1.35,-3.35) {$\mathrm{extract}(k_b)$};
\node[elem] at (-0.65,-2.8) {$x_1$};
\node[cell] at (-0.1,-2.8) {$k_a$};
\node[top cell] at (0.6,-2.8) {$k_b$};
\node[elem] at (-0.65,-3.35) {$x_2$};
\node[top cell] at (-0.1,-3.35) {$k_b$};
\node[elem] at (-0.65,-3.9) {$x_3$};
\node[cell] at (-0.1,-3.9) {$k_c$};
\node[top cell] at (0.6,-3.9) {$k_a$};

\node[elem] at (2.35,-3.35) {$x_3$};
\node[cell] at (2.9,-3.35) {$k_c$};
\node[top cell] at (3.6,-3.35) {$k_a$};

% left column: multi-push
\node[op] at (-1.35,-5.68)
  {$\mathrm{multi\text{-}push}(f)$\\[-0.25ex]
   $f(k_a)=\ell_a$\\[-0.25ex]
   $f(k_b)=\ell_b$};
\node[elem] at (-0.65,-5.4) {$x_1$};
\node[cell] at (-0.1,-5.4) {$k_a$};
\node[top cell] at (0.6,-5.4) {$k_b$};
\node[elem] at (-0.65,-5.95) {$x_2$};
\node[top cell] at (-0.1,-5.95) {$k_a$};

\node[elem] at (2.35,-5.4) {$x_1$};
\node[cell] at (2.9,-5.4) {$k_a$};
\node[cell] at (3.6,-5.4) {$k_b$};
\node[top cell] at (4.3,-5.4) {$\ell_b$};
\node[elem] at (2.35,-5.95) {$x_2$};
\node[cell] at (2.9,-5.95) {$k_a$};
\node[top cell] at (3.6,-5.95) {$\ell_a$};

% right column: multi-pop
\node[op] at (8.15,-1.0) {$\mathrm{multi\text{-}pop}$};
\node[elem] at (9.1,-0.45) {$x_1$};
\node[cell] at (9.65,-0.45) {$k_a$};
\node[cell] at (10.35,-0.45) {$k_b$};
\node[top cell] at (11.05,-0.45) {$\ell_b$};
\node[elem] at (9.1,-1.0) {$x_2$};
\node[cell] at (9.65,-1.0) {$k_a$};
\node[top cell] at (10.35,-1.0) {$\ell_a$};
\node[elem] at (9.1,-1.55) {$x_3$};
\node[top cell] at (9.65,-1.55) {$k_c$};

\node[elem] at (12.0,-0.45) {$x_1$};
\node[cell] at (12.55,-0.45) {$k_a$};
\node[top cell] at (13.25,-0.45) {$k_b$};
\node[elem] at (12.0,-1.0) {$x_2$};
\node[top cell] at (12.55,-1.0) {$k_a$};
\node[elem] at (12.0,-1.55) {$x_3$};
\node[top cell] at (12.55,-1.55) {$k_c$};

% right column: multi-swap
\node[op] at (8.15,-3.43)
  {$\mathrm{multi\text{-}swap}(g)$\\[-0.25ex]
   $g(k_a)=\ell_a$\\[-0.25ex]
   $g(k_b)=\ell_b$};
\node[elem] at (9.1,-3.15) {$x_1$};
\node[cell] at (9.65,-3.15) {$k_a$};
\node[top cell] at (10.35,-3.15) {$k_b$};
\node[elem] at (9.1,-3.7) {$x_2$};
\node[top cell] at (9.65,-3.7) {$k_a$};

\node[elem] at (12.0,-3.15) {$x_1$};
\node[cell] at (12.55,-3.15) {$k_a$};
\node[top cell] at (13.25,-3.15) {$\ell_b$};
\node[elem] at (12.0,-3.7) {$x_2$};
\node[top cell] at (12.55,-3.7) {$\ell_a$};

\draw[sep] (5.25,0.05) -- (5.25,-6.15);

\end{tikzpicture}
    }
    \caption{Examples of operations}
    \label{fig:api_ex}
\end{figure}

\subsection{Algorithm}

The algorithm uses a stack over the alphabet $\mathsf{Unmarked}$
to maintain the triple $\langle\alpha_\emptyset, \beta_\emptyset, c_\emptyset\rangle$ as in the unary case and a multi-stack over the alphabet  \[ \mathcal{K}=\mathsf{Marked}\,\cup\, \mathsf{Below} \times \{\mathsf{tmp}\}\,,\] 
split into $\mathcal{K}_1 = {\sf Below} \,\cup\, {\sf Current} \,\cup\, {\sf Below} \times \{\mathsf{tmp}\}$ and $\mathcal{K}_2 = \mathsf{Above}$,
to maintain the set of candidate nodes $v$ colored with $\langle\alpha_v, \beta_v, \gamma_v\rangle$. Additionally, after processing each tag  the multistack satisfies the following conditions:
\begin{itemize}
\item there are no elements of colors from  $\mathsf{Below}\times\{\mathsf{tmp}\}$;
\item the length of the history of a candidate node is one plus the distance between the current node and the closest common ancestor of the current node and the candidate node. 
\end{itemize}

After processing each tag, we will perform the  following \emph{emit} and \emph{flush} subroutines:
\begin{itemize}
\item[(\emph{emit})] for each $\langle\alpha,\beta, \gamma\rangle \in \mathsf{Marked}$ such that $\mathrm{Range}(\langle\alpha,\beta, \gamma\rangle) = \{\top\}$ extract elements of color $\langle\alpha,\beta, \gamma\rangle$ and emit the iterator;
\item[(\emph{flush})] for each $\langle\alpha,\beta, \gamma\rangle \in \mathsf{Marked}$ such that $\mathrm{Range}(\langle\alpha,\beta, \gamma\rangle) = \{\bot\}$ extract elements of color $\langle\alpha,\beta, \gamma\rangle$ and discard the iterator;
\end{itemize}
where \[\mathrm{Range}\big(\langle \alpha, \beta, \gamma\rangle\big) = \bigcup_{q,q'\in Q} \alpha \big( \delta(\gamma, \beta(q), q')\big)\,. \] 

When processing opening and closing tags, we use two operations on triples, which are based on Lemmas~\ref{lem:boolean:down} and \ref{lem:boolean:up}. 
\begin{itemize}
    \item For a triple $\langle\alpha, \beta, \gamma\rangle$ and $\gamma'$, we define \[\mathrm{Open}_{\gamma'}\big(\langle\alpha, \beta,\gamma\rangle\big) = \langle \alpha', \beta', \gamma'\rangle\] where $\alpha'(q) = \bigcup_{q'\in Q} \alpha(\delta(\gamma, \beta(q),q'))$ and $\beta'=\mathrm{id}$.
    \item For triples $\langle\alpha, \beta,\gamma\rangle$ and  $\langle\alpha', \beta',\gamma'\rangle$, we define \[\mathrm{Close}\big( \langle\alpha, \beta,\gamma\rangle, \langle\alpha', \beta',\gamma'\rangle\big) = \langle\alpha'', \beta'', \gamma''\rangle\] where $\alpha''=\alpha$, $\beta''(q) = \beta(\delta(\gamma', \beta'({\sf init}),q))$, and $\gamma''=\gamma$.
\end{itemize}

We begin with both the stack and the multi-stack empty.  Before reading any tags, we push the triple $\langle\alpha_{\mathsf{root}}, \mathrm{id},\$\rangle$ to the stack. 
Recall that \[\alpha_{\mathsf{root}}(q)= 
\begin{cases} 
\{\top\} & q \textrm{ is final}\\ 
\{\bot\} & q \textrm{ is not final}
\end{cases}
\]
and $\$$ is a reserved element of $\Sigma$ such that $\delta(\$,q,q')=q$.
The stack is now non-empty and it will always be non-empty after a tag is processed, unless an unbalanced closing tag is encountered.  

When an opening tag $c\,\langle$ is read,
\begin{itemize}
\item let $\langle\alpha_\emptyset, \beta_\emptyset, \gamma_\emptyset\rangle$ be the triple at the top of the stack;
\item push to the stack the triple $\mathrm{Open}_{c}\big(\langle\alpha_\emptyset, \beta_\emptyset, \gamma_\emptyset\rangle\big)$;
\item multi-push according to the function \[\mathsf{Marked}  \ni \langle\alpha, \beta, \gamma\rangle \mapsto \mathrm{Open}_{c}\big(\langle\alpha, \beta, \gamma\rangle\big) \in \mathsf{Above}\] extended by identity to $\mathsf{Below}\times\{\mathsf{tmp}\}$ (which does not matter because there are no elements of colours from $\mathsf{Below}\times\{\mathsf{tmp}\}$ in the multi-stack);
\item insert  the current node  with colour  $\mathrm{Open}_{\check{c}}\big(\langle\alpha_\emptyset, \beta_\emptyset, \gamma_\emptyset\rangle\big) \in \mathsf{Current}$ into the multi-stack;
\item emit, flush, and continue.
\end{itemize}

When a closing tag $\rangle$ is read,  
\begin{itemize}
\item pop $\langle\alpha_\emptyset,\beta_\emptyset, \gamma_\emptyset\rangle$ from the stack;
\item if the stack becomes empty, terminate with an error;
\item pop  $\langle\alpha'_\emptyset,\beta'_\emptyset, \gamma'_\emptyset\rangle$ from the stack;
\item push to the stack the triple $\mathrm{Close}\big(\langle\alpha'_\emptyset,\beta'_\emptyset, \gamma'_\emptyset\rangle, \langle\alpha_\emptyset,\beta_\emptyset, \gamma_\emptyset\rangle\big)$;
\item  multi-swap according to the function 
\[\hspace{-3em}\mathsf{Below} \,\cup\, \mathsf{Current}\ni\!\langle \alpha, \beta, \gamma\rangle
\mapsto \big(\mathrm{Close}\big(
\langle 
    \alpha'_\emptyset, 
    \beta'_\emptyset, 
    \gamma'_\emptyset
\rangle,
\langle 
    \alpha, 
    \beta, 
    \gamma
\rangle 
\big), \mathsf{tmp}\big) \!\in \mathsf{Below} \times \{\mathsf{tmp}\}\]
extended by identity to $\mathsf{Above}\cup\mathsf{Below}\!\times\!\{\mathsf{tmp}\}$;
\item multi-pop;
\item multi-swap according to the function 
\begin{align*}
\mathsf{Marked}
\ni
\langle \alpha', \beta', \gamma'\rangle
&
\mapsto 
\mathrm{Close}\big(
\langle 
    \alpha', 
    \beta', 
    \gamma'
\rangle,
\langle 
    \alpha_\emptyset, 
    \beta_\emptyset, 
    \gamma_\emptyset
\rangle 
\big) \in \mathsf{Marked}\,,\\
\hspace{-2em}\mathsf{Below}\times \{\mathsf{tmp}\} \ni \big(\langle \alpha, \beta, \gamma\rangle,\mathsf{tmp}\big) &
\mapsto
\langle \alpha, \beta, \gamma\rangle \in \mathsf{Below}\,;
\end{align*}
\item emit, flush, and continue.
\end{itemize}

Note that the functions used in multi-swaps respect the split of $\mathcal{K}$ into $\mathcal{K}_1$ and $\mathcal{K}_2$. In the first multi-swap this is obvious. In the second one, the second rule maps elements of $\mathcal{K}_1$ to elements of $\mathcal{K}_1$. To see the constraint holds for the first rule, recall that $\mathrm{Close}\big(
\langle 
    \alpha', 
    \beta', 
    \gamma'
\rangle,
\langle 
    \alpha_\emptyset, 
    \beta_\emptyset, 
    \gamma_\emptyset
\rangle 
\big)=
\langle
\alpha', 
\beta\circ\beta',
\gamma'
\rangle
$
where $\beta(q)=\delta(\beta_\emptyset(\mathsf{init}),q)$. Clearly, $\langle\alpha',\beta',\gamma'\rangle\in\mathsf{Above}$ implies
$
\langle
\alpha', 
\beta\circ\beta',
\gamma'
\rangle \in \mathsf{Above}
$, and similarly for $\mathsf{Below}$ and $\mathsf{Current}$. Hence, the first rule also respects the split into $\mathcal{K}_1$ and $\mathcal{K}_2$.

Correctness of the algorithm follows from Lemmas~\ref{lem:boolean:earliest}--\ref{lem:boolean:up} applied separately to each node, the choice of the initial triple pushed to the ordinary stack,  and the semantics of the multi-stack operations. The complexity guarantees follow from the guarantees for the multi-stack. 

\section{Data structure implementation}  \label{sec:ds}

It remains to show that we can implement the Candidate Data Structure operations to work in constant time.
This is not trivial, since a naïve implementation---storing explicit stacks for each element in the structure---would result in a linear number of operations for each \emph{multi-push}, \emph{multi-pop}, and \emph{multi-swap}.

\subsection{Intuition}

Before defining the implementation let us outline the challenges that arise from the requirements stated in \autoref{ssec:data-structure} and how they drive the design given in \autoref{ssec:definition-of-data-structure}.

\paragraph{Buckets} The key insight is that the effect of these operations is the same for all elements of the same color.
Instead of physically maintaining separate stacks for each element we can maintain \emph{buckets}, one for each color.
In each bucket we hold a \emph{list of elements} that currently inhabit that color.
If we did not have to maintain the color history---i.e. support popping---the implementation would be straightforward, as all other operations amount to adding to buckets, relabeling, or emptying them.

\paragraph{Logs} To maintain the history we keep a \emph{log of events} for each bucket, implemented with a stack.
On the stack we put \emph{pointers} to the heads of the lists that were moved into the given bucket in the previous push.
When we push, we move the heads of the lists into their new buckets, which might result in a concatenation of two or more lists.
When we pop, we take the pointers to those heads, decouple the lists, and move them back to appropriate buckets.

\paragraph{Invalidate} Supporting \emph{extract} introduces nuance.
The naïve implementation---walking through the list in the bucket---has two issues.
Firstly, it can cause\linebreak\emph{staleness}, as log entries might point to nodes that were already emitted;
a \emph{multi-pop} could then reintroduce a node back into the data structure after it was extracted.
This could be alleviated by an additional level of indirection in the node that would contain a staleness bit, so that popping could identify and ignore stale nodes.
The second issue, however, is that such an implementation does not provide constant delay enumeration: each node has to have its staleness bit flipped before all of them are returned.
To obtain our result we cannot touch every \emph{node} that gets extracted, instead we need to maintain and propagate information about on the level of \emph{buckets}.
The idea is to have \emph{extract} operations mark the emptied bucket with an \emph{Invalidate} event on top of its log.
During the pop, when we encounter a bucket with a fresh \emph{Invalidate} event on top of its log, we know that any lists that get moved back \emph{from that bucket} have been already emitted.
Because a push always moves entire buckets, this means that all the source buckets are now stale and we can propagate the \emph{Invalidate} event back to them.
This only works under the assumption that $\mathcal{K}$ is two disjoint sets---$\mathcal{K}_1$ is never stale and always has an empty log.
Otherwise, inserting to a bucket with an \emph{Invalidate} event would be possible and break the data structure.

\subsection{Formal definition}\label{ssec:definition-of-data-structure}

We will now formalize this intuition and provide its implementation in pseudocode (Algorithm~\autoref{fig:data-structure-pseudocode}).
The Candidate Data Structure over $\mathcal{K} = \mathcal{K}_1 \cup \mathcal{K}_2$ holds:

\newcommand{\op}{\mathsf{op}}
\newcommand{\head}{\mathsf{head}}
\newcommand{\tail}{\mathsf{tail}}

\begin{enumerate}
    \item a global \emph{operation counter} $\op$;
    \item a collection of $|\mathcal{K}|$ \emph{buckets}: $(B_k)_{k\in \mathcal{K}}$, each containing either nothing or a list of candidates defined by two \emph{candidate nodes}: head and tail, accessible via functions $\head(\cdot)$ and $\tail(\cdot)$.
    \item for each bucket $B_k$ a \emph{log} $E_k$: a stack of \emph{events}; an event can be of one of two kinds:
    \begin{itemize}
        \item a \emph{Move event}: a pointer to a candidate node, a color, and the timestamp of its source \emph{multi-push} as given by the operation counter; or
        \item an \emph{Invalidation event}: a token with a timestamp corresponding to the source \emph{extract} operation.
    \end{itemize}
\end{enumerate}

\noindent
A \emph{candidate node} is a doubly-linked list node holding a candidate, its direct successor, and direct predecessor.
This is sufficient to implement all our operations.
We assume a fixed, consistent order of iteration over $\mathcal{K}$.

\algrenewcommand\algorithmicindent{0.75em}

\begin{algorithm}[!h]
	\caption{Pseudocode of the data structure.}\label{fig:data-structure-pseudocode}	
	\smallskip
	\begin{algorithmic}[1]
    %\vspace{-3em}
            \footnotesize
		\hspace{-3em}
		\begin{varwidth}[t]{1.0\textwidth}
		\Procedure{Create()}{} 
        \State ${\sf op} \gets 0$
        \For{$k\in\mathcal{K}$}
        \State $B_k \gets \text{None}$
        \State $E_k \gets \textbf{new}\ {\sf Stack}$
        \EndFor
		\EndProcedure
        
        \vspace{0.4em}
        
		\Procedure{Insert}{$x$, $k$}
        \State $N \gets \textbf{new}\ {\sf Node}(x)$
        \If{$B_k = \text{None}$}
        \State $\head(B_k)\gets N$
        \State $\tail(B_k)\gets N$
        \Else
        \State $M \gets \head(B_k)$
        \State $\head(B_k)\gets N$
        \State $N.{\sf next} \gets M$
        \State $M.{\sf prev} \gets N$
        \EndIf
		\EndProcedure

        \vspace{0.4em}

    \Procedure{MultiPop()}{}
      \For{$k\in\mathcal{K}$}
        \State $B^{\sf new}_k \gets B_k$
        \State $L_k \gets {\bf new}\ {\sf List}$
        \State ${\sf invSrc}_k \gets {\bf False}\ , \ {\sf invDest}_k \gets {\bf False}$
      \EndFor
      %\For{$k\in\mathcal{K}_2$}
        % \State {\sf invSrc} $\gets$ array of $|\mathcal{K}|$ {\bf False} values
        % \State {\sf invDest} $\gets$ array of $|\mathcal{K}|$ {\bf False} values
      %\EndFor
      \For{$k\in\mathcal{K}_2$}
        \While{$E_k.{\sf top}.{\sf timestamp} = {\sf op} - 1$}
        \State $e' \gets E_k.{\sf pop}()$
          \If{$e'$ is Invalidate}:
            \State ${\sf invSrc}_k \gets {\bf True}$
          \Else \Comment{$e'$ is Move}
            \State {\bf let} $e' = \text{Move}({\sf op}^*, N^*, k^*)$
            \If{${\sf invSrc}_k$}:
              \If{$k^* \in \mathcal{K}_2$}:
                \State ${\sf invDest}_{k^*} \gets {\bf True}$
            \Else
              \EndIf
              \State $\ell^* \gets (\head:N^*, \tail:\tail(B^{\sf new}_k))$
              \State $L_{k^*}.{\sf append}(\ell^*)$
              \If{$N^*.{\sf prev} \neq \text{None}$}
                \State $\tail(B^{\sf new}_k) \gets N^*.{\sf prev}$
                \State $N^*.{\sf prev}.{\sf next} \gets \text{None}$
                \State $N^*.{\sf prev} \gets \text{None}$
              \Else
                \State $B^{\sf new}_k \gets \text{None}$
              \EndIf
            \EndIf
          \EndIf
        \EndWhile
      \EndFor
      \For{$k\in\mathcal{K}$}
        \For{$\ell^*\in L_k$}
        \State {\sc AssignOrConcat}$(B^{\sf new}_{k},\ell^*)$
          \If{$k \in \mathcal{K}_2\ {\bf and}\ {\sf invDest}_k$}
            \State $E_k.{\sf push}(\text{Invalidate}({\sf op} - 1))$
          \EndIf
        \EndFor
      \EndFor
      \State ${\sf op} \gets {\sf op} - 1$
    \EndProcedure

		\end{varwidth}
        \hspace{2.5em}
		\begin{varwidth}[t]{0.8\textwidth}

    \Procedure{MultiPush}{$f$}:
      \For{$k\in\mathcal{K}$}
        \State $B^{\sf new}_k \gets \text{None}$
      \EndFor
      \For{$k\in\mathcal{K}$}
        \If{$B_k \neq \text{None}$}
          \State {\sc AssignOrConcat}$(B^{\sf new}_{f(k)},B_k)$
          \State $E_{f(k)}.{\sf push}(\text{Move}({\sf op}, \head(B_k), k))$
        \EndIf
      \State $B \gets B^{\sf new}$
      \State ${\sf op} \gets {\sf op} + 1$
      \EndFor
    \EndProcedure
        
        \vspace{0.4em}
        
    \Procedure{MultiSwap}{$f$}
      \For{$k\in\mathcal{K}$}
        \State $B^{\sf new}_k \gets B_k$
        \State $E^{\sf move}_k \gets {\bf new}\ {\sf Stack}$
      \EndFor
      \For{$k\in\mathcal{K}$}
        \If{$B_k \neq \text{None}$}
          \State {\sc AssignOrConcat}$(B^{\sf new}_{f(k)},B_k)$
        \EndIf
        \While{$E_k.{\sf top}.{\sf timestamp} = {\sf op} - 1$}
          \State $E^{\sf move}_{f(k)}.{\sf push}(E_k.{\sf pop}())$
        \EndWhile
      \EndFor
      \For{$k\in\mathcal{K}$}
        \State $B_k \gets B^{\sf new}_k$
        \While{$E^{\sf move}_k.{\sf top} \neq {\sf None}$}
          \State $E_k.{\sf push}(E^{\sf move}_k.{\sf pop}())$
        \EndWhile
      \EndFor
    \EndProcedure
        
        \vspace{0.4em}
        
            \Procedure{extract}{$k$}
      \If{$B_k \neq \text{None}$}
        \State $N \gets\head(B_k)$
        \State $B_k \gets \text{None}$
        \State $N^* \gets {\bf new}\ {\sf Node}(N.{\sf candidate})$
        \State $N^*.{\sf next} \gets N.{\sf next}$
        \State $N.{\sf candidate} \gets \text{None}$
        \State $N.{\sf next} \gets \text{None}$
        \If{$k \in \mathcal{K}_2$}
            \State $E_k.{\sf push}(\text{Invalidate}({\sf op} - 1))$
        \EndIf
        \State {\bf return} $N^*$
      \Else
        \State {\bf return} \text{None}
      \EndIf
    \EndProcedure
    
        \vspace{0.4em}
        
    \Procedure{AssignOrConcat}{$L_1,L_2$}
    \State \Comment{Auxiliary function}
      \If{$L_1 = \text{None}$}
        \State $L_1 \gets L_2$
      \Else
        \State $\tail(L_1).{\sf next} \gets\head(L_2)$
        \State$\head(L_2).{\sf prev} \gets \tail(L_!)$
        \State $\tail(L_1) \gets\tail(L_2)$
      \EndIf
    \EndProcedure
		\end{varwidth}
	\end{algorithmic}
\end{algorithm}

\vspace{0.5em}

\paragraph{\em Create.}
Initialize an empty structure with $|\mathcal{K}|$ empty buckets with empty logs, and set $\op \gets 0$.

\vspace{0.5em}

\paragraph{\emph{Insert} $x$ of color $k \in \mathcal{K}_1$.}
Create a candidate node $N$ holding the candidate $x$.
If $B_k$ is empty, set $N$ as the head and tail of $B_k$;
otherwise, append the current head of $B_k$ after $N$ and set $N$ as the head of $B_k$.

\vspace{0.5em}

\paragraph{\emph{Extract} elements of color $k$.}
Store $\head(B_k)$ in a node $N$ and set $B_k$ to empty.
Return an exact clone of $N$ as the iterator.
To ensure that we do not hold onto the memory of the returned list---recall that $N$ may be referenced in an existing $Move$ event---we nullify $N$'s value and successor.
Finally, if $k \in \mathcal{K}_2$, put an \emph{Invalidate} event on top of $E_k$ with timestamp of $\op$.

\vspace{0.5em}

\paragraph{\emph{Multi-push} according to $f: \mathcal{K} \to \mathcal{K}_2$.}
Let us call the current buckets $(B^{\sf curr}_k)_{k\in\mathcal{K}}$.
We initialize empty buckets $(B^{\sf new}_k)_{k\in\mathcal{K}}$.
For each $k \in \mathcal{K}$:

\begin{itemize}
    \item if $B^{\sf curr}_k$ is empty, there is nothing to be done; otherwise, let $L$ be the list stored in $B^{\sf curr}_k$;
    \item if $B^{\sf new}_{f(k)}$ is still empty, then set it to contain $L$; otherwise, append $L$ after the current list in $B^{\sf new}_{f(k)}$ and set $\tail(B^{\sf new}_{f(k)}) \gets \tail(B^{\sf curr}_k)$;
    \item finally, add an entry pointing to $\head(B^{\sf curr}_k)$ with timestamp $\op$ and color $k$ to the log $E_{f(k)}$. 
\end{itemize}

\noindent
At the end of the operation we set $B_k \gets B^{\sf new}_{k}$ for each $k$, discard $B^{\sf new}_k$, and set $\op \gets \op + 1$.

\vspace{0.5em}

\paragraph{\em Multi-pop.}
Similarly, let us call the current buckets $(B^{\sf curr}_k)_{k\in\mathcal{K}}$.
We initialize a collection of buckets $(B^{\sf new}_k)_{k\in\mathcal{K}}$ to the same state as the current buckets.
To take into account node staleness we first investigate the tops of all logs $E_k$ for $k \in \mathcal{K}_2$: if the top of $E_k$ is an \emph{Invalidate} event with timestamp equal to $\op$, we remember that the bucket is stale and pop that event.
For each color in $\mathcal{K}_2$ we also remember whether it should be marked as stale after the pop.
Then, the operation proceeds in two phases: first, we detach all nodes that need to be moved from their current lists; after that we perform the move.
We allocate $|\mathcal{K}|$ lists of nodes: phase one will populate them, such that the $k$-th list will contain nodes that need to be moved to bucket $k \in \mathcal{K}$; phase two will empty the lists and perform the merges.
For each $k \in \mathcal{K}_2$ remove each \emph{Move} entry in the log $E_k$ with a timestamp equal to $\op - 1$ and process it:

\begin{itemize}
    \item let $N$ be the node pointed to by the event and let $\ell$ be its color;
    \item if $k$ was detected as stale we can ignore the node as it is guaranteed to have been extracted; however, if $\ell \in \mathcal{K}_2$, we need to mark $\ell$ to become stale after the pop;
    \item otherwise (if $k$ is not stale), put $N$ into the $\ell$-th node list, detach it from its preceding node (if any), and set $\tail(B^{\sf new}_k)$ to that preceding node (or set $B^{\sf new} \gets {\sf None}$ if no such node exists).
\end{itemize}

After phase one we move the lists from $B^{\sf new}_k$ to $B^{\sf curr}_k$ for each $k \in \mathcal{K}$ and progressively merge each node from the node lists to the end of its bucket (like in multi-push).
Finally, for each color that was marked as becoming stale, we push an \emph{Invalidate} event with the timestamp of $\op - 2$. At the end of the operation we discard the copies of buckets and set $\op \leftarrow \op - 1$.

\vspace{0.5em}

\paragraph{\emph{Multi-swap} according to $f: \mathcal{K} \rightarrow \mathcal{K}$.}
This operation is equivalent to the multi-push except for the changes to the log.
After copying all the buckets we also pop the frames of all logs with timestamp equal to $\op - 1$ and put them to the side as a list $(e'_{k})_{k\in\mathcal{K}}$.
After the bucket manipulation is done we move each $e'_k$ to the top of $E_{f(k)}$.

\subsection{Proof of correctness}

\newcommand{\cds}{\mathit{cds}}
\newcommand{\ex}{\mathit{extracted}}

The correctness of the data structure hinges on the correctness of the \emph{multi-pop}.
Indeed, it is easy to see that a sequence of operations without any pops would correctly model the interface from \autoref{ssec:data-structure}.
The only delicate part is the invalidation mechanism.

An important observation is that the only nodes that are relevant are those which were heads of buckets during any \emph{multi-push}.
All other nodes are simply successors of these (perhaps indirectly) and are therefore always moved and extracted the same way as their heads.
We want the data structure to maintain the following invariant:

\begin{lemma}\label{lem:data-structure-invalidation-invariant}
    If in some $E_k$ there is a \emph{Move} log entry with timestamp $\op - 1$ and a head node that has already been extracted, then there is also an \emph{Invalidate} entry with timestamp $\op - 1$ at the top of $E_k$.
\end{lemma}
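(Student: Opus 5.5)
We prove the invariant by induction on the sequence of operations performed on the structure, showing that each of \emph{Create}, \emph{Insert}, \emph{Extract}, \emph{Multi-push}, \emph{Multi-swap} and \emph{Multi-pop} preserves it. The property only involves the topmost ``frame'' of each log $E_k$, namely the entries with timestamp $\op-1$. It will be convenient to prove a slightly stronger statement that refers to every frame: for each timestamp $t$ and each $k\in\mathcal{K}_2$, if the frame of $E_k$ with timestamp $t$ contains a \emph{Move} entry whose head node was extracted, then that frame contains an \emph{Invalidate} entry with timestamp $t$ above all its \emph{Move} entries. The reason for strengthening is that after a multi-pop an older frame becomes the top frame, so we need the property there as well. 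We also use the observation from the text that it suffices to track heads: a node that is not a head of some bucket at a multi-push is always moved and extracted together with its head.

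The trivial cases come first. \emph{Create} produces empty logs. \emph{Insert} only touches buckets of colours in $\mathcal{K}_1$, whose logs stay empty, and it extracts nothing. \emph{Extract}$(k)$ is the only operation that makes nodes extracted. The nodes it affects are exactly those currently in $B_k$. Any \emph{Move} entry pointing to such a node lies in the top frame of $E_k$, since a multi-push always moves whole buckets and logs the heads in the destination's log. For $k\in\mathcal{K}_2$ the operation pushes \emph{Invalidate}$(\op-1)$ onto $E_k$, which is what is required. For $k\in\mathcal{K}_1$ the log is empty, so there is nothing to check. The node cloning done by \emph{Extract} affects only the returned iterator, not which head pointers are stale. \emph{Multi-push} opens a fresh frame with timestamp $\op$, which becomes the new top frame. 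Its entries point to heads of buckets that currently hold live candidates, so none of them is extracted, and older frames are unchanged. \emph{Multi-swap} moves the entire top frame of $E_k$, including any \emph{Invalidate} entry, onto $E_{f(k)}$ together with the bucket contents. Because $f$ preserves $\mathcal{K}_2$ and the order of entries within the frame is kept, the \emph{Invalidate} entry stays above the \emph{Move} entries. If two frames are merged into the same destination, each carries its own \emph{Invalidate} entry when needed. One subtle point here is the orientation within a merged frame; I would state the invariant as ``some \emph{Invalidate} entry exists in the frame and is processed before the \emph{Move} entries it guards'' so that this merging step goes through.

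The main obstacle is \emph{Multi-pop}. After it, the frame with timestamp $\op-2$ becomes the top frame, and we must show that if it contains a \emph{Move} entry whose head was extracted, then an \emph{Invalidate} entry is present. Consider a head $N^*$ recorded in the frame with timestamp $\op-2$ of some $E_{k^*}$. This head was moved at that push into a bucket $k$, with a \emph{Move} entry in the frame with timestamp $\op-1$ of $E_k$. If $N^*$ was extracted before the pop, there are two cases. In the first case the extraction happened after the push at time $\op-1$. Then $N^*$ travelled into bucket $k$ and was extracted from there (possibly after swaps, which carry the log entries along). By the induction hypothesis the top frame of $E_k$ has an \emph{Invalidate} entry, so ${\sf invSrc}_k$ is set, and the code sets ${\sf invDest}_{k^*}$ and pushes \emph{Invalidate}$(\op-2)$ onto $E_{k^*}$. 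In the second case the extraction happened earlier. Then it was already recorded in the frame with timestamp $\op-2$ by the strengthened hypothesis, and that frame was untouched by the pop. For $k^*\in\mathcal{K}_1$ nothing is needed, because $\mathcal{K}_1$ logs are empty.

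The remaining delicate step is to confirm that the entire contents of the source bucket at that push were extracted together, so that invalidating all of $k$'s sources is sound. This follows because a push moves whole buckets and an extract empties a whole bucket. It does, however, require arguing that no \emph{Insert} can add a live element into a stale bucket. This is exactly where the disjointness $\mathcal{K}_1\cap\mathcal{K}_2=\emptyset$ is used: inserts only go into $\mathcal{K}_1$ buckets, whose logs are always empty.
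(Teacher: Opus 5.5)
Your plan (induction over the operations, with Multi-pop split by whether the head was extracted before or after the multi-push being undone) matches the paper's propagation argument. However, the strengthened invariant you set out to prove is false, so the induction as written breaks at \emph{Extract}.

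Invalidation is lazy: \emph{Extract} only guards the top frame of the single log $E_k$. For example, take $c\in\mathcal{K}_1$ and $a,b\in\mathcal{K}_2$. Insert $x$ with colour $c$. Multi-push sending $c$ to $a$, which logs $\mathrm{Move}(0,N_x,c)$ in $E_a$. Multi-push sending $a$ to $b$, which logs $\mathrm{Move}(1,N_x,a)$ in $E_b$. Then extract $b$. Now $\mathsf{op}=2$, and the frame of $E_a$ with timestamp $0$ contains a \emph{Move} entry whose head has been extracted, yet it has no \emph{Invalidate} entry. One is put there only when the next multi-pop propagates it via $\mathsf{invSrc}_b$ and $\mathsf{invDest}_a$.

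Accordingly, your claim in the \emph{Extract} case that every \emph{Move} entry pointing to an extracted node lies in the top frame of $E_k$ is wrong. A node that was a head at earlier multi-pushes is also referenced from lower frames of other logs. These are precisely the lower-timestamp entries (e.g.\ timestamp $\mathsf{op}-2$ in $E_\ell$) that the paper's proof singles out. Note also that if your strengthened invariant held, your first Multi-pop case would need no propagation at all.

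The repair is to restrict the statement about lower frames to heads extracted while that frame was the top one, i.e.\ while $\mathsf{op}=t+1$. Then the pieces fit:
\begin{itemize}
\item \emph{Extract} does establish this restricted statement; it is what your argument really shows.
\item Operations performed while $\mathsf{op}>t+1$ never remove entries of frame $t$, and a multi-pop can only add \emph{Invalidate} entries to it.
\item Heads of frame $t$ extracted later are handled when a multi-pop uncovers frame $t$, by your first case. That case needs only the lemma for the top frame.
\end{itemize}

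Two smaller points. In your first case, the \emph{Move} entry with timestamp $\mathsf{op}-1$ points to the head $H$ of the whole bucket $k^*$ at that push, not necessarily to $N^*$. You should note that $H$ and $N^*$ remain in one bucket afterwards, hence are extracted together. In your second case, $N^*$ was not moved by that push at all, so the sentence introducing both cases does not apply to it.

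With these corrections your proof becomes a more explicit version of the paper's. Your remark on frames merged by a multi-swap is a genuine refinement. When a stale frame and a live one are merged, the \emph{Invalidate} entry need not end up literally at the top of the merged frame. So the weaker formulation you propose is the right one to carry through the induction.
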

\begin{proof}
    This is trivially true when there are no \emph{extract} operations.
    Assume an extract on bucket $B_k$ happens.
    The head of $B_k$ is emitted and \emph{Invalidate} with timestamp $\op - 1$ is created on top of $E_k$.
    Assume there exists a \emph{Move} log entry with timestamp $\op - 1$, color $\ell$, and which points to a node $N$ extracted from $B_k$.
    This entry has to reside in $E_k$, since $N$ is in $B_k$.
    Naturally, the condition is satisfied for now.
    Since $N$ is now absent from the data structure after the \emph{extract}, the only other \emph{Move} logs pointing to $N$ have to have a lower timestamp.
    In particular, if there exist any such entries, then one with timestamp $\op - 2$ exists in $E_\ell$.
    When a \emph{multi-pop} happens, the \emph{Invalidate} tag is popped, $B_k$ is marked as stale, and the marker is propagated to $E_\ell$ with the timestamp $\op - 2$, hence the invariant is maintained.
\end{proof}

As long as \autoref{lem:data-structure-invalidation-invariant} holds, the invalidation mechanism succeeds in ensuring
that nodes emitted via \emph{extract} stay out of the buckets and any stale log entries pointing to them are ignored.

\vspace{0.5em}

\noindent\textit{Complexity analysis}.
We observe that {\sc Insert} and {\sc Extract} take constant time, and {\sc Create} and {\sc MultiPush} take $O(|\mathcal{K}|)$ time. The cases of {\sc MultiPop} and {\sc MultiSwap} are more subtle given that they perform a while-loop inside of a for-loop: we observe that (1) each {\sc MultiPush} adds at most $|\mathcal{K}|$ elements with the same timestamp {\sf op} at a time, (2) {\sc Extract} is called at most $|\mathcal{K}|$ times for the same timestamp {\sf op} at a time, and (3) we can inductively argue that {\sc MultiPop} also adds at most $|\mathcal{K}|$ Invalidate events with the same timestamp {\sf op} at a time; and so, the while-loops in  {\sc MultiPop} and {\sc MultiSwap} are entered at most $|\mathcal{K}|$ entered across all $k\in K$ values in the for-loop. We conclude that these procedures also take $O(|\mathcal{K}|)$ time.

The space taken by the data structure is limited by the number of active candidates and at most $|\mathcal{K}|$ entries on logs per each \emph{multi-push} that was not yet popped, which satisfied the requirements of \autoref{ssec:data-structure}.

\section{Related work} \label{sec:related}
Evaluation of queries over streamed trees has been studied extensively, especially for XML and XPath~\cite{gou_efficient_2007, corentin_stackless, munoz_streaming_2024, xml_stream_survey_2013, xml_streams_automata_2004}.
%\martin{more}
This line of work shares our motivation of evaluating queries without materializing the entire input tree, but the works focus on specific query languages or fragments and do not provide an earliest-answering guarantee for arbitrary unary MSO queries.

Earliest query answering was studied in works of Gauwin, Niehren, and Tison~\cite{dediu_bounded_2009,gauwin_queries_2011}, where the central notions are {\em bounded delay} and {\em bounded concurrency}. 
These notions are in spirit close to our work, as they measure logically-minimal resource requirements: the former measures how long an answer has to remain pending, and the latter how many of these answers have to coexist.
The main difference  is that they focus on restrictions of languages where these measures can actually be bounded, whereas our results hold for arbitrary MSO queries.

Arguably, the closest reference to our work is a recent paper of Muñoz and Riveros~\cite{munoz_streaming_2024}, which studies streaming enumeration over nested documents for MSO queries. Their framework provides strong enumeration guarantees, including constant update time in data complexity and output-linear delay enumeration (which implies constant delay for unary queries). 
However, they do not guarantee earliest query answering.
They discuss how to perform a form of early answering via $\Delta$-enumeration, and this allows the algorithm to print an output only if could not have been printed at an earlier step. 
However, earliest query answering has an additional condition: an output should be printed when it is guaranteed to be a solution for {\em any} continuation of the tree. 
As far as we know, there is no way to modify a visibly pushdown annotator $\mathcal{A}$ into $\mathcal{A}'$ such that performing $\Delta$-enumeration on $\mathcal{A}'$ simulates earliest query answering for $\mathcal{A}$.

\section{Conclusions} \label{sec:concl}

We have studied earliest query answering for unary MSO queries over streamed
trees, under data complexity. The query, or equivalently the automaton, is fixed,
and the algorithm processes the input with constant update time while outputting and discarding
answers as soon as the current prefix forces them. Several questions remain
open.

The first question is whether the construction is tight in combined complexity. In
this paper, all constants may depend on the automaton. This is unavoidable to some
extent, but the precise dependence on the query is not well understood. In
particular, compiling MSO formulas into deterministic automata may incur a
large blow-up, and our construction further manipulates objects derived from the
state space of this automaton. It would be interesting to determine whether this
cost is inherent for earliest query answering, or whether substantially smaller
representations are possible for natural query languages such as XPath or
JSONPath fragments.

A second direction concerns the representation of outputs. We have used an
enumeration model in which answers are returned as nodes, via an
iterator. This is natural when the output is viewed as a set of positions, but it
does not exploit situations where the set of answers has a more concise
description. Already over words, one may have queries whose answer set consists,
for example, of all positions between the first two occurrences of a given letter.
In such cases, enumerating every answer may be less appropriate than returning a
symbolic representation of the whole set. Understanding which concise
representations are compatible with earliest query answering seems nontrivial:
one would need to define when such a representation is itself forced by the
current prefix, and how it can be maintained without hiding an implicit
enumeration cost.

Another practical issue is the interaction between query optimization and parsing
optimization. Our model treats the input as a stream of already tokenized 
events, and charges the algorithm once for each symbol that is read. This
abstracts away an important source of efficiency in practical JSONPath engines:
systems such as rsonpath~\cite{gienieczko_supporting_2024} and
JSONSki~\cite{jiang_jsonski_2022} do not merely optimize the logical evaluation
of a query, but also use the query to guide the lexer. They may scan for relevant
structural characters, avoid materialixing irrelevant subtrees, or skip regions of
the input that cannot contribute to the answer. From this perspective, skipping
is not just an implementation trick, but part of the query optimization problem
itself.

It would be interesting to develop a model of earliest query answering in which
the query optimizer is allowed to choose a parsing strategy together with the
streaming evaluation strategy. Such a model should make explicit which fragments
of the input may be skipped, what information about them must still be computed,
and how the cost of these parsing operations is measured. The earliest setting
adds a further difficulty: skipping a fragment is sound only if doing so does not
delay an answer that would already be forced by a prefix inside the skipped
fragment, and does not discard information needed to reject or maintain pending
candidates. Understanding when parsing-level optimizations preserve earliest
answers, and how they can be combined with candidate-optimal memory management,
seems to be an important step toward applying the present theory to practical
JSONPath engines.

Finally, JSONPath filters allow comparisons between selected values, and in
particular may compare subtrees for equality. Such tests go beyond the
finite-state view used in this paper. In a streaming setting, subtree equality is
especially problematic: deciding it may require retaining one subtree while the
other is being read, or comparing two subtrees whose relevant parts have not yet
both appeared in the stream. This interacts badly with both earliest output and
space optimality. A natural open problem is to identify fragments with subtree
equality that still admit earliest evaluation with constant update time, or to
prove that such comparisons necessarily require larger memory, delayed output, or
a different output model. This question is related to the automata-theoretic
approach to efficient XPath evaluation, in particular the linear-time evaluation
result of Bojańczyk and Parys~\cite{BojXPATH}, but the combination of subtree equality with earliest
streaming evaluation appears to require different techniques.
%\input{old_contents}
%\input{contents}

%%
%% The acknowledgments section is defined using the "acks" environment
%% (and NOT an unnumbered section). This ensures the proper
%% identification of the section in the article metadata, and the
%% consistent spelling of the heading.

%%
%% The next two lines define the bibliography style to be used, and
%% the bibliography file.
\bibliographystyle{abbrv}
\bibliography{Earliest}
% Biblio entry: Efficient algorithms for evaluating xpath over streams

%\clearpage
%\appendix
%\input{appendix.tex}

\end{document}